\newtheorem{theorem}{Theorem}
\newtheorem{definition}{Definition}
\newtheorem{proposition}{Proposition}
\newtheorem{corollary}{Corollary}
\begin{document}

\title{Improved Perfect Space-Time Block Codes}
\author{
\authorblockN{K. Pavan Srinath and B. Sundar Rajan, {\it Senior Member, IEEE}\\}
\authorblockA{Dept of ECE, The Indian Institute of Science, \\
Bangalore-560012, India\\
Email:\{pavan,bsrajan\}@ece.iisc.ernet.in\\
}
}
\maketitle
\vspace{-15mm}
\begin{abstract}
Perfect space-time block codes (STBCs) are based on four design criteria - full-rateness, non-vanishing determinant, cubic shaping and uniform average transmitted energy per antenna per time slot. Cubic shaping and transmission at uniform average energy per antenna per time slot are important from the perspective of energy efficiency of STBCs. The shaping criterion demands that the {\it generator matrix} of the lattice from which each layer of the perfect STBC is carved be unitary. In this paper, it is shown that unitariness is not a necessary requirement for energy efficiency in the context of space-time coding with finite input constellations, and an alternative criterion is provided that enables one to obtain full-rate (rate of $n_t$ complex symbols per channel use for an $n_t$ transmit antenna system) STBCs with larger {\it normalized minimum determinants} than the perfect STBCs. Further, two such STBCs, one each for $4$ and $6$ transmit antennas, are presented and they are shown to have larger normalized minimum determinants than the comparable perfect STBCs which hitherto had the best known normalized minimum determinants.
\end{abstract}

 \begin{keywords}
Cyclic division algebra, Galois group, MIMO systems, non-vanishing determinant, shaping criterion, space-time block codes. 
 \end{keywords}

 \section{Introduction and Background}\label{sec_intro}
  Perfect space-time block codes (STBCs) for multiple input, multiple output antenna (MIMO) systems were introduced in the landmark paper \cite{ORBV} for $2$, $3$, $4$ and $6$ transmit antennas. These were designed to meet four important criteria, namely
\begin{enumerate}
 \item full-rateness of STBCs.
 \item non-vanishing determinant (NVD) (see Definition \ref{nvd_def}).
 \item constellation cubic shaping (see subsection \ref{subsec_perfect}).
 \item uniform average transmitted energy per antenna per time slot.
\end{enumerate}
The first two criteria were shown to be sufficient for diversity-multiplexing gain tradeoff (DMT)-optimality and approximate universality \cite{elia}. The last two criteria were framed from the perspective of energy efficiency and hence coding gain. Later, perfect STBCs were constructed for arbitrary number of transmit antennas in \cite{new_per}. The perfect STBCs in general have among the largest known normalized minimum determinants (see Definition \ref{min_det_stbc}) among existing STBCs in their comparable class and in particular, the perfect STBCs of \cite{ORBV} have the largest known normalized minimum determinants for $2$, $3$, $4$ and $6$ transmit antennas. However, we note that the cubic shaping criterion, which demands that the generator matrix of each layer \cite{ORBV} of the codeword matrices of perfect STBCs be unitary, is not a necessary criterion (although sufficient) for energy efficiency in the context of space-time coding with finite input constellations. We propose an alternative criterion that preserves energy-efficiency and enables one to obtain STBCs with larger normalized minimum determinants than the perfect STBCs of \cite{ORBV} while meeting the other three design criteria. We then show the existence of one such STBC in literature for $4$ transmit antennas which has the best normalized minimum determinant. This STBC was first proposed in \cite{spcom} but its superior coding gain was not identified. We then present a new STBC for $6$ transmit antennas which, to the best of our knowledge, has the largest normalized minimum determinant for $6$ transmit antennas. We call these STBCs ``improved perfect STBCs'' (see Definition \ref{imp_per} in Section \ref{sec_mod_shaping}). 

\subsection{Contributions and paper organization}
The contributions of this paper may be summarized as follows.
\begin{enumerate}
\item We propose a modified shaping criterion that enables one to obtain rate-$n_t$ STBCs with larger coding gains than the perfect STBCs while retaining all the other desirable properties of the perfect STBCs. 
 \item For $4$ and $6$ transmit antennas, we present such STBCs which have a larger normalized minimum determinant than the comparable perfect STBCs.
\end{enumerate}

The paper is organized as follows. In Section \ref{sec_system_model}, we give the system model, relevant definitions and a brief overview of perfect STBCs. Section \ref{sec_mod_shaping} presents the modified shaping criterion while the improved perfect STBCs for $4$ and $6$ transmit antennas are presented in Sections \ref{stbc_4} and \ref{stbc_6}, respectively. Appendix I provides some basic definitions and results in number theory which are used in this paper.

\subsection*{Notations}
Throughout the paper, the following notations are used. 
\begin{itemize}
 \item Bold, lowercase letters denote vectors, and bold, uppercase letters denote matrices.
 \item $\textbf{X}^{H}$, $\textbf{X}^{T}$, $det(\textbf{X})$, $tr(\textbf{X})$ and $\Vert \textbf{X} \Vert$ denote the conjugate transpose, the transpose, the determinant, the trace and the Frobenius norm of $\textbf{X}$, respectively. 
\item $\vert \mathcal{S}\vert$ denotes the cardinality of the set $\mathcal{S}$ and for the set $\mathcal{T} \subset \mathcal{S}$, $\mathcal{S} \setminus  \mathcal{T}$ denotes the set of elements of $\mathcal{S}$ not in $\mathcal{T}$. 
\item $\textbf{I}$ and $\textbf{O}$ denote the identity matrix and the null matrix of appropriate dimensions.
\item $\mathbb{E}(X)$ denotes the expectation of the random variable $X$. 
\item $\mathbb{R}$, $\mathbb{C}$ and $\mathbb{Q}$ denote the field of real, complex and rational numbers, respectively. $\mathbb{Z}$ denotes the ring of rational integers. 
\item Unless used as an index, a subscript or a superscript, $i$ denotes $\sqrt{-1}$ and $\omega$ denotes the primitive third root of unity. 
\item For fields $\mathbb{K}$ and $\mathbb{F}$, $\mathbb{K}/\mathbb{F}$ denotes that $\mathbb{K}$ is an extension of $\mathbb{F}$ and $[\mathbb{K}:\mathbb{F} ] = m$ indicates that $\mathbb{K}$ is a finite extension of $\mathbb{F}$ of degree $m$. 
\item$Gal( \mathbb{K}/\mathbb{F})$ denotes the Galois group of $\mathbb{K}/\mathbb{F}$, i.e., the group of $\mathbb{F}$-linear automorphisms of $\mathbb{K}$. 
\item For an element $a$ of a ring $\mathcal{R}$, $a\mathcal{R}$ denotes the ideal of $\mathcal{R}$ generated by $a$.   
\end{itemize}

\section{System Model and definitions}\label{sec_system_model}
We consider an $n_t$ transmit antenna, $n_r$ receive antenna MIMO system ($n_t\times n_r$ system) with perfect channel-state information available at the receiver (CSIR) alone. The channel is assumed to be quasi-static with Rayleigh fading. The system model is 
\begin{equation}\label{ch_mod}	
 \textbf{Y} = \sqrt{\rho}\textbf{HS} + \textbf{N}
\end{equation}

\noindent where $\textbf{Y} \in \mathbb{C}^{n_r\times {\textrm{T}}}$ is the received signal matrix, $\textbf{S} \in \mathbb{C}^{n_t\times {\textrm{T}}}$ is the codeword matrix that is transmitted over a block of ${\textrm{T}}$ channel uses, $\textbf{H} \in \mathbb{C}^{n_r\times n_t}$ and $\textbf{N} \in \mathbb{C}^{n_r\times {\textrm{T}}}$ are respectively the channel matrix and the noise matrix with entries independently and identically distributed (i.i.d.) circularly symmetric complex Gaussian random variables with zero mean and unit variance. The average signal-to-noise ratio (SNR) at each receive antenna is denoted by $\rho$. It follows that 
\begin{equation}\label{energy_con}
 \mathbb{E}(\Vert \textbf{S} \Vert^2) = {\textrm{T}} .
\end{equation}

A space-time block code (STBC) $\mathcal{S}$ of block-length ${\textrm{T}}$ for an $n_t$ transmit antenna MIMO system is a finite set of complex matrices of size $n_t \times {\textrm{T}}$. An STBC transmitting $k$ independent complex information symbols in ${\textrm{T}}$ channel uses is said to have a rate of $k/{\textrm{T}}$ complex symbols per channel use. Throughout the paper, we consider linear STBCs \cite{HaH} whose codeword matrices are of the form $\textbf{S} = \sum_{i=1}^{k}s_i\textbf{A}_i$ where the $k$ independent information symbols $s_i$ take values from a complex constellation $\mathcal{A}_q$ which is QAM or HEX, and $\textbf{A}_i$, $i=1,\cdots,k$, are the complex weight matrices of the STBC. An $M$-PAM, $M$-QAM and $M$-HEX with $M = 2^a$, $a$ even and positive, are respectively given as  
\begin{eqnarray*}
 M\textrm{-PAM} & = & \{ -M+1,-M+3, -M+5,\cdots,M-1\},\\
 M\textrm{-QAM} & = & \left\{ a + ib, a,b \in \sqrt{M}\textrm{-PAM}\right\},\\
 M\textrm{-HEX} & = & \left\{ a + \omega b, a,b \in \sqrt{M}\textrm{-PAM}\right\}.
\end{eqnarray*}
Among STBCs transmitting at the same rate in bits per channel use, the metric for comparison that decides their error performance is the normalized minimum determinant which is defined as follows.

\begin{figure*}
\begin{equation}\label{form_div}
 \textbf{F} = \left[ \begin{array}{ccccc}
                      a_0 & \gamma\tau(a_{n-1}) & \gamma \tau^2(a_{n-2}) & \cdots & \gamma\tau^{n-1}(a_{1})\\
                      a_1 & \tau(a_0) & \gamma\tau^2(a_{n-1}) & \cdots & \gamma\tau^{n-1}(a_2)\\
                      a_2 & \tau(a_1) & \tau^2(a_0) & \cdots & \gamma\tau^{n-1}(a_3)\\
                      a_3 & \tau(a_2) & \tau^2(a_1) & \cdots & \gamma\tau^{n-1}(a_4)\\
                      \vdots & \vdots & \vdots &  & \vdots\\
                      a_{n-1} & \tau(a_{n-2}) & \tau^2(a_{n-3}) & \cdots & \tau^{n-1}(a_0)\\                      
                     \end{array} \right].
\end{equation}
\hrule
\end{figure*}

\begin{definition}\label{min_det_stbc}({\it Normalized minimum determinant})
 For an STBC $\mathcal{S}$ whose codeword matrices satisfy \eqref{energy_con}, the normalized minimum determinant $\delta_{min}(\mathcal{S})$ is defined as
\begin{equation}\label{min_det_eq}
\delta_{min}(\mathcal{S}) = \min_{\textbf{S}_i,\textbf{S}_j \in \mathcal{S},i\neq j} \left\{ \left \vert det\left(\textbf{S}_i - \textbf{S}_j\right) \right\vert^2 \right\}.  
\end{equation}
\end{definition}
For full-diversity STBCs, $\delta_{min}(\mathcal{S})$ defines the coding gain \cite{TSC}. Between two competing STBCs with the same rate in bits per channel use, the one with the larger normalized minimum determinant is expected to have a better error performance. 

\begin{definition}\label{scheme} ({\it STBC-scheme} \cite{tse}) An STBC-scheme $\mathcal{S}_{sch}$ is defined as a family of STBCs indexed by $\rho$, each STBC of block length $\textrm{T}$ so that $\mathcal{S}_{sch} = \{ \mathcal{S}(\rho)\}$, where the STBC $\mathcal{S}(\rho)$ corresponds to an average signal-to-noise ratio of $\rho$ at each receive antenna.  
\end{definition}

\begin{definition}\label{nvd_def}({\it Non-vanishing determinant} \cite{BRV}) A linear STBC-scheme $\mathcal{S}_{sch} = \{\mathcal{S}(\rho)\}$, all of whose STBCs $\mathcal{S}(\rho)$ are defined by weight matrices $\{ \textbf{A}_{i},i=1,\cdots,k \}$ and employ complex constellations (QAM or HEX) that are finite subsets of an infinite complex lattice $\mathcal{A}_L$ ($\mathbb{Z}[i]$ or $\mathbb{Z}[\omega]$), is said to have the non-vanishing determinant (NVD) property if $\mathcal{S}_{\infty}  \triangleq \left\{ \sum_{i=1}^k s_{i}\textbf{A}_{i}  \vert s_{i} \in \mathcal{A}_L \right\}$ is such that 
\begin{equation*}
 \min_{\textbf{S} \in \mathcal{S}_{\infty}, \textbf{S} \neq \textbf{O}} \left\{ \vert det(\textbf{S})  \vert^2\right\} = c > 0
\end{equation*}
for some strictly positive constant $c$.
\end{definition}

\begin{definition}\label{gen_mat}({\it Generator matrix of an STBC}) For a linear STBC that is given by $\mathcal{S} = \left\{ \sum_{i=1}^k s_i\textbf{A}_i \right\}$, the {\emph{generator}} matrix $\textbf{G} \in \mathbb{C}^{{\textrm{T}}n_t \times k}$ is defined as \cite{HaH}
\begin{equation*}
\textbf{G} = \left[vec(\textbf{A}_{1})\ vec(\textbf{A}_{2})\ \cdots \ vec(\textbf{A}_{k})\right]
\end{equation*}
where the operation $vec(\textbf{A})$ denotes the vector obtained by stacking the columns of $\textbf{A}$ one below the other.
\end{definition}

\subsection{Cyclic Division Algebras}
A cyclic division algebra (CDA) $\mathcal{A}$ of degree $n$ over a number field $\mathbb{F}$ is a vector space over $\mathbb{F}$ of dimension $n^2$. The center of $\mathcal{A}$ is $\mathbb{F}$ and there exists a maximal subfield $\mathbb{K}$ of $\mathcal{A}$ such that $\mathbb{K}$ is a Galois extension of degree $n$ over $\mathbb{F}$ with a cyclic Galois group generated by $\tau$. $\mathcal{A}$ is a right vector space over $\mathbb{K}$ and can be expressed as 
\begin{equation*}\label{cda_def}
 \mathcal{A} = \mathbb{K} \oplus \textbf{i}\mathbb{K} \oplus \textbf{i}^2\mathbb{K} \oplus \cdots \oplus \textbf{i}^{n-1}\mathbb{K}
\end{equation*}
 where $a\textbf{i}  =  \textbf{i}\tau(a)$, $\forall a \in \mathbb{K}$, $\textbf{i}^n  =  \gamma$ for some $\gamma \in \mathbb{F}^{\times} = \mathbb{F}\setminus \{0\}$ such that the norm $N_{\mathbb{K}/\mathbb{F}}(a) = \prod_{i=0}^{n-1}\tau^i(a)$ of any element $a \in \mathbb{K}$ satisfies  
\begin{equation}\label{non_norm}
 N_{\mathbb{K}/\mathbb{F}}(a) \neq \gamma^t, ~ t= 1,\cdots,n-1.
\end{equation}

The CDA $\mathcal{A}$ is denoted by $(\mathbb{K}/\mathbb{F}, \tau, \gamma)$. $\mathcal{A}$ has a matrix representation and in particular, an element $a_0+\textbf{i}a_1+\cdots+\textbf{i}^{n-1}a_{n-1} $ of $\mathcal{A}$, where $a_i \in \mathbb{K}$, has the representation shown in \eqref{form_div} at the top of the next page. In addition, every nonzero matrix of the form shown in \eqref{form_div} is invertible and its determinant lies in $\mathbb{F}^{\times} $ \cite{jacobson}, i.e.,
\begin{equation}\label{det_f}
 det(\textbf{F}) \in \mathbb{F}^{\times},  ~~\textbf{F} \neq \textbf{O}.
\end{equation}

 For more on CDAs, one can refer to \cite{jacobson}, \cite{sethuraman}, and references therein. 

\subsection{STBCs from CDA}\label{subsec_cda}
 For the purpose of space-time coding, the signal constellation is generally $M$-QAM or $M$-HEX which are finite subsets of $\mathbb{Z}[i]$ and $\mathbb{Z}[\omega]$, respectively. So, $\mathbb{F}$ is naturally chosen to be $\mathbb{Q}(i)$ or $\mathbb{Q}(\omega)$ for which the ring of integers are respectively $\mathbb{Z}[i]$ and $\mathbb{Z}[\omega]$, and a CDA $\mathcal{A}$ of degree $n_t$ over $\mathbb{F}$ is constructed. We denote the ring of integers of $\mathbb{F}$ and $\mathbb{K}$ by $\mathcal{O}_\mathbb{F}$ and $\mathcal{O}_\mathbb{K}$, respectively. The codeword matrices of the STBC obtained from the CDA $\mathcal{A}$ have the structure shown in \eqref{form_div} with $a_i$, $i=0,1,\cdots,n_t-1$, expressed as linear combinations of elements of some chosen $\mathbb{F}$-basis over $\mathcal{O}_\mathbb{F}$, and hence STBCs from CDAs encode $n_t^2$ complex information symbols in $n_t$ channel uses. An STBC $\mathcal{S}$ that is obtained from CDA is expressible (prior to SNR normalization) as $\mathcal{S} = \left\{\sum_{i=1}^{n_t^2}s_i \textbf{A}_i, s_i \in \mathcal{A}_q\right\}$ where $\mathcal{A}_q$ is either QAM or HEX, and $\textbf{A}_i$, $1,\cdots,n_t$, are the complex weight matrices. The following proposition relates the choice of $\mathbb{F}$-basis to the NVD property of STBC-schemes that are based on STBCs from CDA.

\begin{proposition}\label{prop1}
 An STBC-scheme that is based on STBCs from CDA has a non-vanishing determinant if all the elements of the $\mathbb{F}$-basis belong to $\mathcal{O}_\mathbb{K}$.
\end{proposition}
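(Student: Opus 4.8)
The plan is to track where the determinant of a nonzero codeword lives and then exploit the discreteness of the base ring $\mathcal{O}_\mathbb{F}$. The whole argument rests on two earlier facts: by \eqref{det_f} the determinant of any nonzero codeword lies in $\mathbb{F}^\times$, and for the scheme $\mathcal{S}_\infty$ the information symbols range over $\mathcal{A}_L = \mathcal{O}_\mathbb{F}$ (i.e. $\mathbb{Z}[i]$ or $\mathbb{Z}[\omega]$). The hypothesis that the $\mathbb{F}$-basis lies in $\mathcal{O}_\mathbb{K}$ is exactly what is needed to upgrade ``$\det \in \mathbb{F}^\times$'' to ``$\det \in \mathcal{O}_\mathbb{F}\setminus\{0\}$''.

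First I would show that every entry of a codeword matrix is an algebraic integer of $\mathbb{K}$. Each layer symbol $a_i$ is an $\mathcal{O}_\mathbb{F}$-linear combination $a_i = \sum_j s_{ij} b_j$ of the chosen $\mathbb{F}$-basis $\{b_j\}$, with coefficients $s_{ij} \in \mathcal{A}_L = \mathcal{O}_\mathbb{F}$. Since each $b_j \in \mathcal{O}_\mathbb{K}$ by assumption and $\mathcal{O}_\mathbb{F} \subseteq \mathcal{O}_\mathbb{K}$, we get $a_i \in \mathcal{O}_\mathbb{K}$. The remaining entries of the matrix in \eqref{form_div} have the form $\tau^\ell(a_i)$ or $\gamma\,\tau^\ell(a_i)$; because a Galois automorphism maps the ring of integers $\mathcal{O}_\mathbb{K}$ into itself, and $\gamma \in \mathcal{O}_\mathbb{F} \subseteq \mathcal{O}_\mathbb{K}$ in the standard construction, all entries lie in $\mathcal{O}_\mathbb{K}$.

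Next, since $\det(\textbf{S})$ is a polynomial with rational-integer coefficients in the matrix entries and $\mathcal{O}_\mathbb{K}$ is a ring, $\det(\textbf{S}) \in \mathcal{O}_\mathbb{K}$. On the other hand, by \eqref{det_f}, for $\textbf{S} \neq \textbf{O}$ we have $\det(\textbf{S}) \in \mathbb{F}^\times$. Combining the two, $\det(\textbf{S}) \in \mathcal{O}_\mathbb{K} \cap \mathbb{F}$, and I would then invoke the standard identity $\mathcal{O}_\mathbb{K} \cap \mathbb{F} = \mathcal{O}_\mathbb{F}$: an element of $\mathbb{F}$ that is an algebraic integer lies in $\mathcal{O}_\mathbb{F}$. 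Hence $\det(\textbf{S})$ is a nonzero element of $\mathcal{O}_\mathbb{F}$. Finally I would use the discreteness of $\mathcal{O}_\mathbb{F}$: every nonzero $x = a + bi \in \mathbb{Z}[i]$ has $|x|^2 = a^2 + b^2 \geq 1$, and every nonzero $x = a + b\omega \in \mathbb{Z}[\omega]$ has $|x|^2 = a^2 - ab + b^2 \geq 1$. Thus $|\det(\textbf{S})|^2$ is a positive rational integer for every nonzero $\textbf{S} \in \mathcal{S}_\infty$, so the minimum in Definition \ref{nvd_def} is bounded below by $c = 1 > 0$, establishing NVD.

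The step I expect to require the most care is justifying that \emph{all} matrix entries, not merely the layer symbols $a_i$, are algebraic integers; this hinges on the Galois group preserving $\mathcal{O}_\mathbb{K}$ and on the nonnorm element $\gamma$ being chosen in $\mathcal{O}_\mathbb{F}$, as it is in the perfect-STBC constructions of \cite{ORBV}. The remaining link $\mathcal{O}_\mathbb{K} \cap \mathbb{F} = \mathcal{O}_\mathbb{F}$ is routine, but it is precisely this reduction that converts the algebraic-integrality of the determinant into membership in the discrete lattice $\mathcal{O}_\mathbb{F}$, which is the true source of the non-vanishing property.
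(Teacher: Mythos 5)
Your overall route is the same as the paper's: show that the entries of a nonzero codeword of $\mathcal{S}_\infty$ are algebraic integers of $\mathbb{K}$, invoke \eqref{det_f} to place the determinant in $\mathbb{F}^\times$, intersect to get $\mathcal{O}_\mathbb{K}\cap\mathbb{F}=\mathcal{O}_\mathbb{F}$, and finish with the discreteness of $\mathbb{Z}[i]$ and $\mathbb{Z}[\omega]$. However, there is a genuine gap at the step you yourself flag as delicate: you assume $\gamma\in\mathcal{O}_\mathbb{F}$, but the proposition is stated for a general CDA, where the definition only requires $\gamma\in\mathbb{F}^\times=\mathbb{F}\setminus\{0\}$. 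The paper explicitly needs this generality --- it recalls that in the construction of \cite{new_per} $\gamma$ is taken to be the ratio $a/\bar{a}$ of a nonzero element of $\mathcal{O}_\mathbb{F}$ and its conjugate, which has unit modulus but is in general not an algebraic integer. In that case the entries $\gamma\,\tau^{\ell}(a_j)$ of \eqref{form_div} need not lie in $\mathcal{O}_\mathbb{K}$, so your conclusion $\det(\textbf{S})\in\mathcal{O}_\mathbb{F}$ does not follow and the bound $c=1$ fails.

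The paper repairs exactly this point by clearing denominators: write $\gamma=a/b$ with $a,b\in\mathcal{O}_\mathbb{F}\setminus\{0\}$ and multiply every codeword by $b$. Then all entries of $b\,\textbf{S}$ lie in $\mathcal{O}_\mathbb{K}$, so $b^{n_t}\det(\textbf{S})=\det(b\,\textbf{S})\in(\mathbb{F}\cap\mathcal{O}_\mathbb{K})\setminus\{0\}=\mathcal{O}_\mathbb{F}\setminus\{0\}$, whence
\begin{equation*}
\min_{\textbf{S}\in\mathcal{S}_\infty,\ \textbf{S}\neq\textbf{O}}\left\{\vert\det(\textbf{S})\vert^2\right\}\ \geq\ \frac{1}{\vert b\vert^{2n_t}}\ >\ 0,
\end{equation*}
which is all the NVD property requires (a positive constant, not necessarily $1$). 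With this one extra scaling step your argument becomes the paper's proof; for the specific codes $\mathcal{C}_4$ and $\mathcal{C}_6$, where $\gamma=i$ and $\gamma=-\omega$ are units of $\mathcal{O}_\mathbb{F}$, your version as written is already sufficient and indeed gives the sharper constant $c=1$ used in the minimum-determinant computations.
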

\begin{proof}
 Consider the STBC-scheme $\mathcal{S}_{sch} = \{\mathcal{S}(\rho)\}$, where all the $\mathcal{S}(\rho)$ are obtained from the same CDA and given by $\mathcal{S}(\rho) = \{ \beta\sum_{i=1}^{n_t^2}s_i \textbf{A}_i, s_i \in \mathcal{A}_q(\rho)\}$, where $\mathcal{A}_q(\rho)$ is the regular QAM or HEX constellation whose size is dependent on $\rho$ so that the required multiplexing gain is achieved (see \cite{elia} for details), and $\beta$ is the normalizing scalar that ensures that the average SNR at each receive antenna is $\rho$. From Definition \ref{nvd_def}, $\mathcal{S}_{sch}$ has the NVD property if $\mathcal{S}_{\infty} = \left\{ \sum_{i=1}^{n_t^2}s_i \textbf{A}_i, s_i \in \mathcal{O}_\mathbb{F}\right\}$ ($\mathcal{O}_\mathbb{F}$ is either $\mathbb{Z}[i]$ or $\mathbb{Z}[\omega]$) is such that 
\begin{equation*}
 \min_{\textbf{S} \in \mathcal{S}_{\infty}, \textbf{S} \neq \textbf{O}} \left\{ \vert det(\textbf{S})  \vert^2\right\} = c > 0
\end{equation*}
for some constant $c$. Let the $\mathbb{F}$-basis $\{\theta_i,i=1,\cdots,n_t \}$ be such that all the $\theta_i$ belong to $\mathcal{O}_\mathbb{K}$. Since $\gamma \in \mathbb{F}$ and satisfies \eqref{non_norm}, we can express $\gamma$  as $\gamma = \frac{a}{b}$ with $a,b \in \mathcal{O}_{\mathbb{F}} \setminus \{0\}$. Now, multiplying all the matrices of $\mathcal{S}_{\infty}$ by $b$ results in all the entries of all the matrices of $\mathcal{S}_{\infty}$ belonging to $\mathcal{O}_{\mathbb{K}}$ and from \eqref{det_f}, any nonzero matrix of $\mathcal{S}_{\infty}$ has a determinant that belongs to $(\mathbb{F} \cap \mathcal{O}_\mathbb{K}) \setminus \{0\} = \mathcal{O}_\mathbb{F} \setminus \{0\} $. Since $\mathcal{O}_\mathbb{F}$ is either $\mathbb{Z}[i]$ or $\mathbb{Z}[\omega]$, we have
\begin{equation*}
 \min_{\textbf{S} \in \mathcal{S}_{\infty}, \textbf{S} \neq \textbf{O}} \left\{ \vert det(\textbf{S})  \vert^2\right\} \geq \frac{1}{\vert b \vert^{2n_t}} > 0
\end{equation*}
which proves the proposition.
\end{proof}

So, for the purpose of space-time coding, an $\mathbb{F}$-basis $\{\theta_i, i=1,2,\cdots,n_t ~\vert~ \theta_i \in \mathcal{O}_\mathbb{K}\}$ is chosen (this can also be an $\mathcal{O}_\mathbb{F}$-basis of $\mathcal{O}_\mathbb{K}$) and the $a_i \in \mathbb{K}$ in \eqref{form_div} are expressed as linear combinations of elements of this basis over $\mathcal{O}_\mathbb{F}$. The STBC which encodes symbols from a complex constellation $\mathcal{A}_q$ ($M$-QAM or $M$-HEX) has its codewords of the form shown in \eqref{form_div} with $a_i = \sum_{j=1}^{n_t}s_{ij}\theta_j$, $s_{ij} \in \mathcal{A}_q \subset \mathcal{O}_{\mathbb{F}}$ with $\mathcal{O}_{\mathbb{F}} = \mathbb{Z}[i]$ or $\mathbb{Z}[\omega]$. A codeword matrix of STBCs from CDA has $n_t$ layers \cite{ORBV}, with the $(i+1)^{th}$ layer transmitting the vector $\textbf{D}_{i}[a_{i}, \tau(a_{i}), \cdots, \tau^{n_t-1}(a_{i}) ]^T$, $i =0,\cdots,n_t-1$, where $\textbf{D}_i$ is a diagonal matrix given by
 \begin{equation}\label{form_Di}
  \textbf{D}_{i} \triangleq \textrm{diag}[\underbrace{1,\cdots,1}_{n_t-i \textrm{ times}}, \underbrace{\gamma,\cdots,\gamma}_{i\textrm{ times}}]
 \end{equation}
and $[a_{i}, \tau(a_{i}), \cdots, \tau^{n_t-1}(a_{i}) ]^T =\textbf{R}\textbf{s}_i, i =0,\cdots,n_t-1, $ where $\textbf{s}_i = [s_{i1}, s_{i2},\cdots,s_{in_t}]^T  \in \mathcal{A}_q^{n_t\times1}$ and $\textbf{R} \in \mathbb{C}^{n_t\times n_t}$ is the {\it generator matrix} of each layer of the STBC (not to be confused with the generator matrix $\textbf{G}$ of the STBC which is given by Definition \ref{gen_mat}) and is given as \begin{equation}\label{integral_basis}
 \textbf{R} = \frac{1}{\sqrt{\lambda}}\left[ \begin{array}{ccc}
                      \theta_1 &  \cdots & \theta_{n_t}\\
                      \tau(\theta_1)  &  \cdots & \tau(\theta_{n_t})\\
                      \vdots &  \vdots  & \vdots \\
                      \tau^{n_t-1}(\theta_1)  &  \cdots & \tau^{n_t-1}(\theta_{n_t})\\                     
                     \end{array} \right]
\end{equation}
where, as mentioned earlier, $\{ \theta_i, i = 1, 2,\cdots,n_t ~\vert~ \theta_i \in \mathcal{O}_\mathbb{K}\}$ is an ${\mathbb{F}}$-basis of ${\mathbb{K}}$ and $\lambda$ is a suitable real-valued scalar designed so that the STBC meets the energy constraint in \eqref{energy_con}.
\subsection{Perfect Codes}\label{subsec_perfect}

 The perfect STBCs are designed to be equipped with the following two desirable properties \cite{ORBV}, \cite{new_per}. 
\begin{itemize}
 \item {\it Approximate-universality} : This is achieved if the STBC satisfies the following criteria.
\begin{description}
 \item [C1] {\it Full-rate}\footnote{In this paper, a rate-$n_t$ STBC is referred to as a full-rate STBC.} : The STBC transmits $n_t^2$ independent complex information symbols in $n_t$ channel uses. 
\item [C2] {\it Non-vanishing determinant} : The STBC-scheme has the NVD property. 
\end{description}
\item {\it Energy-efficiency/coding gain} : To achieve this, the STBC should satisfy the following criteria.
\begin{description}
\item [C3] {\it Constellation shaping criterion} : The matrix $\textbf{R}$ given by \eqref{integral_basis} is unitary \cite{ORBV} so that on each layer, the energy required to transmit the linear combination of information symbols is equal to the energy required to transmit the information symbols themselves, i.e., $\Vert\textbf{Rs}_i\Vert^2 = \Vert \textbf{s}_i\Vert^2$, $i=0,\cdots,n_t-1$, with the notations as used in the previous subsection.
\item [C4] {\it Uniform average transmitted energy} : The average transmitted energy for all the antennas in all time slots is the same.
\end{description}
\end{itemize}

To satisfy C1, $\mathbb{F}$ is chosen to be $ \mathbb{Q}(i)$ or $\mathbb{Q}(\omega)$ and  a CDA of degree $n_t$ over $\mathbb{F}$ is constructed. C2 is satisfied by choosing an $\mathbb{F}$-basis $\{\theta_i, i=1,2,\cdots,n_t ~\vert~ \theta_i \in \mathcal{O}_{\mathbb{K}}\}$ which guarantees a non-vanishing determinant from Proposition \ref{prop1}.

 C3 is satisfied by choosing the $\mathbb{F}$-basis $\{\theta_i, i=1,2,\cdots,n_t ~\vert~ \theta_i \in \mathcal{O}_{\mathbb{K}}\}$ such that $\textbf{R}$ 
is unitary. C4 is satisfied by choosing $\gamma$ such that $\vert \gamma \vert^2 =1 $. In \cite{ORBV}, $\gamma$ is chosen to be in $\mathcal{O}_\mathbb{F}$ while in \cite{new_per}, $\gamma$ is chosen to be the ratio of a suitable element $a \in \mathcal{O}_\mathbb{F} \setminus \{0\}$ and its complex conjugate. In the former case, the minimum determinant, prior to normalizaton, is a nonzero positive integer while in the latter case, it is $\frac{1}{\vert a \vert^{2(n_t-1)}}$ \cite{new_per}. Choosing $\gamma$ to be in $\mathcal{O}_\mathbb{F}$ restricts the construction of the perfect STBCs to only $2,3,4$ and $6$ transmit antennas \cite{ORBV} but these STBCs have the largest known coding gains in their class\footnote{There are certain non-linear STBCs, for example in \cite{max_order}, which beat the Golden code. These STBCs employ spherical shaping, involve additional complexity in encoding and are not sphere-decodable. We do not consider this class of non-linear STBCs in this paper.}.

\section{Modified Shaping criterion}\label{sec_mod_shaping}

For an STBC that is obtained from CDA to be energy efficient, C3, which asks for $\textbf{R}$ to be unitary, is a sufficient but not a necessary criterion - it is not necessary that on the $i^{th}$ layer, the energy required to transmit $a_{i-1}$, $\tau(a_{i-1})$, $\cdots$, and $\tau^{n_t-1}(a_{i-1})$ be equal to the energy used for sending the information symbols $s_{ij}$ themselves. It is sufficient that the {\it average} energy required to send the linear combination of the information symbols on each layer is equal to the {\it average} energy used for sending the information symbols themselves, i.e., $\mathbb{E}\left(\Vert\textbf{Rs}_i\Vert^2\right) = \mathbb{E}\left(\Vert \textbf{s}_i\Vert^2\right)$, $i=0,\cdots,n_t-1$ (with the notations as in Subsection \ref{subsec_cda}), where the expectation is over the distribution of $\textbf{s}_i$ which by assumption has probability mass function (PMF) given by $p_{\textbf{s}_i}(\textbf{s}) = (1/M)^{n_t}$, $\forall\textbf{s} \in \mathcal{A}_q^{n_t\times1}$. Hence, unitariness of $\textbf{R}$ is not necessary. However, in literature, a unitary $\textbf{R}$ is seen as desirable as it makes the STBC information-lossless. We elaborate on this in the following subsection.

\subsection{Unitary generator matrix $\textbf{G}$ and information-losslessness}
An STBC is said to be {\it information-lossless} \cite{damen} if the maximum instantaneous mutual information of the equivalent MIMO channel after space-time processing is the same as the maximum instantaneous mutual information of the MIMO channel without space-time processing. The maximum instantaneous mutual information (in bits per channel use) supported by the MIMO channel without an STBC encoder is \cite{tel}  
\begin{equation*}
 C(\textbf{H}) = \max_{tr(\textbf{Q}) \leq \rho}\log_2det\left(\textbf{I}+\textbf{HQ}\textbf{H}^H \right)
\end{equation*}
where $\textbf{Q}$ is a non-negative definite matrix. A good approximation for $\textbf{Q}$ is taken\footnote{For calculating the ergodic capacity which is the expectation of $C(\textbf{H})$ over the distribution of $\textbf{H}$, $(\rho/n_t)\textbf{I}$ is the optimal $\textbf{Q}$.} to be $(\rho/n_t)\textbf{I}$ so that 
\begin{equation}\label{cap1}
 C(\textbf{H}) \approx \log_2det\left(\textbf{I}+\frac{\rho}{n_t}\textbf{H}\textbf{H}^H \right).
\end{equation} 
Now, for linear STBCs of the form $\mathcal{S} = \{ \sum_{i=1}^{k}s_i\textbf{A}_i\}$, the signal model given in \eqref{ch_mod} can be rewritten as
\begin{equation*}
vec(\textbf{Y}) = \sqrt{\rho} (\textbf{I}_\textrm{T} \otimes \textbf{H})\textbf{G}\textbf{s} + vec(\textbf{N}) 
\end{equation*}
where $\textbf{I}_\textrm{T}$ is the identity matrix of size $\textrm{T} \times \textrm{T}$, $\textbf{G}$ is the generator matrix defined in Definition \ref{gen_mat} and $\textbf{s}$ is the vector of information symbols belonging to $\mathcal{A}_q^{k\times1}$. For this model, the maximum mutual information for a given channel matrix $\textbf{H}$ is 
\begin{equation*}
 C^{\prime}(\textbf{H}) = \max_{tr(\textbf{GQ}^\prime\textbf{G}^H )\leq \rho\textrm{T}  }\left(\frac{1}{\textrm{T}}\log_2det\left(\textbf{I}+ \bar{\textbf{H}}G\textbf{Q}^\prime\textbf{G}^H\bar{\textbf{H}}^H \right)\right)
\end{equation*} 
where $\textbf{Q}^\prime$ is non-negative definite and $\bar{\textbf{H}} = \textbf{I}_\textrm{T} \otimes \textbf{H}$. When $\textbf{G}$ is unitary (possible only when $k = n_t \textrm{T}$) and $\textbf{Q}^\prime = (\rho/n_t)\textbf{I}$, $C^\prime(\textbf{H})$ is equal to $C(\textbf{H})$ (assuming $C(\textbf{H})$ is equal to the right hand side of \eqref{cap1}) and hence the STBC is information-lossless \cite{HaH}, \cite{damen}. For STBCs from CDA, if $\textbf{R}$ given by \eqref{integral_basis} is unitary, so is $\textbf{G}$.

However, it is important to note that the expressions for both $C(\textbf{H})$ and $C^\prime(\textbf{H})$ are obtained for {\it Gaussian inputs} (since the entropy of the output is maximized if and only if the input is Gaussian). In the case of STBCs, the inputs information symbols take values from $\mathcal{A}_q$ which is $M$-QAM or $M$-HEX, and all the signal points are equally likely to be chosen so that the PMF of $s_i$ is $p_{s_i}(s) = 1/M$, $\forall s\in \mathcal{A}_q$. So, for the signal model $\textbf{y} = \sqrt{\beta} \textbf{Hs} + \textbf{n}$ where $\textbf{s} \in \mathcal{A}_q^{n_t \times 1}$ and $\mathbb{E}\left(\Vert\sqrt{\beta}\textbf{Hs}\Vert^2\right) = \rho n_r$, the constellation constrained mutual information $C_c(\textbf{H})$ is not given by \eqref{cap1} but by the following expression \cite{hoch}, \cite{bar}.
\begin{eqnarray}\nonumber 
 C_{c}(\textbf{H}) & = & -\mathbb{E}\log_2\left( \frac{1}{(M\pi)^{n_t}}\sum_{\textbf{s} \in \mathcal{A}_q^{n_t\times1}}e^{-\Vert \textbf{y} - \sqrt{\beta}\textbf{Hs} \Vert^2} \right) \\
\label{cap_con1}
& & - n_t\log_2(\pi e)
\end{eqnarray}
where the expectation is over the distribution of $\textbf{y}$. With space time coding, the corresponding constellation constrained mutual information is 
 \begin{eqnarray}\nonumber 
 C^\prime_{c}(\textbf{H}) & = & -\frac{1}{\textrm{T}}\mathbb{E}\log_2\left( \frac{1}{(M\pi)^{n_t\textrm{T}}}\sum_{\textbf{s} \in \mathcal{A}_q^{n_t\textrm{T}\times1}}e^{-\Vert \textbf{y}^\prime - \sqrt{\rho}\bar{\textbf{H}}\textbf{Gs} \Vert^2} \right) \\
\label{cap_con2}
& & - n_t\log_2(\pi e)
\end{eqnarray}
where $\textbf{y}^\prime = vec(\textbf{Y})$ and the expectation is over the distribution of $\textbf{y}^\prime$, and $\mathbb{E}\left(\Vert\textbf{Gs}\Vert^2\right) = \textrm{T}$. It is clear from \eqref{cap_con1} and \eqref{cap_con2} that the significance of unitariness (or scaled unitariness) of the generator matrix $\textbf{G}$ is questionable when finite constellations are used. In particular, the notion of information-lossless STBCs is itself questionable.

\subsection{Modified Shaping criterion}

Having noted that unitariness of $\textbf{G}$ and hence of $\textbf{R}$ is not a necessary criterion, we propose a change in C3 as follows. The modified shaping criterion can be separated into two subcriteria which are 
\begin{description}
 \item[C3.1] the {\it average} energy required to transmit the linear combination of the information symbols on each layer is equal to the {\it average} energy used for sending the information symbols themselves, i.e., $\mathbb{E}\left(\Vert\textbf{Rs}_i\Vert^2\right) = \mathbb{E}\left(\Vert \textbf{s}_i\Vert^2\right)$, $i=0,\cdots,n_t-1$, where the expectation is over the distribution of $\textbf{s}_i$ which by assumption has a PMF given by $p_{\textbf{s}_i}(\textbf{s}) = (1/M)^{n_t}$, $\forall\textbf{s} \in \mathcal{A}_q^{n_t\times1}$.
\item[C3.2] All the $n_t^2$ symbols are transmitted at the same average energy.
\end{description}

The rationale behind C3.1 is obvious - we do not wish to blow up the average energy required to transmit the information symbols. The reason for coming up with C3.2 is that no symbol should be favoured over other symbols with respect to energy required for transmission. We assume that the average energy of $\mathcal{A}_Q$ is $E$ so that $\mathbb{E}( \Vert \textbf{s}_i\Vert^2) = n_tE$, and because of the symmetry of $M$-QAM and $M$-HEX, we have $\mathbb{E}( \textbf{s}_i\textbf{s}_i^H) =E\textbf{I}$. It is also assumed that $\vert \gamma \vert^2 = 1$ so that $\textbf{D}_{i}$ given by \eqref{form_Di} is unitary, since it is a necessary condition for C4 to be satisfied. With these assumptions, we have the following proposition.

\begin{proposition}
C3.1, C3.2 and C4 are together satisfied if and only if $\textbf{R}$ given by \eqref{integral_basis} is such that all of its rows and columns have a Euclidean norm equal to unity. 
\end{proposition}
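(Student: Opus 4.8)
The plan is to translate each of the three criteria into an explicit statement about the Euclidean norms of the rows and columns of $\textbf{R}$, after which the claimed equivalence can simply be read off. The two standing assumptions carry the computation: $\mathbb{E}(\textbf{s}_i\textbf{s}_i^H)=E\textbf{I}$ annihilates every cross-term in the second-moment calculations, and $\vert\gamma\vert^2=1$ makes each $\textbf{D}_i$ unitary, so the diagonal twist never affects energy. Throughout I would use the identity $\mathbb{E}(\textbf{s}_i^H\textbf{M}\textbf{s}_i)=tr\!\left(\textbf{M}\,\mathbb{E}(\textbf{s}_i\textbf{s}_i^H)\right)=E\,tr(\textbf{M})$, valid for any fixed matrix $\textbf{M}$.

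I would first dispatch C3.1. Since $\mathbb{E}(\Vert\textbf{R}\textbf{s}_i\Vert^2)=\mathbb{E}(\textbf{s}_i^H\textbf{R}^H\textbf{R}\textbf{s}_i)=E\,tr(\textbf{R}^H\textbf{R})$ while $\mathbb{E}(\Vert\textbf{s}_i\Vert^2)=n_tE$, criterion C3.1 is exactly $tr(\textbf{R}^H\textbf{R})=n_t$. As this trace equals both $\sum_j(\text{squared norm of column }j)$ and $\sum_c(\text{squared norm of row }c)$, C3.1 fixes each of these two sums to $n_t$ but constrains no individual norm.

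The individual norms would then be pinned down from C3.2 and C4 using the layer structure of \eqref{form_div}. The symbol $s_{ij}$ enters only layer $i$, contributing $R_{cj}s_{ij}$ to the $c$-th component of $\textbf{R}\textbf{s}_i$; each such component is placed into the codeword by $\textbf{D}_i$ up to a factor of modulus $\vert\gamma\vert=1$, so the total average energy carried by $s_{ij}$ is $E\sum_c\vert R_{cj}\vert^2$, the squared norm of column $j$ --- independent of the layer index $i$ precisely because $\vert\gamma\vert^2=1$. Hence C3.2 is equivalent to all column norms being equal, which together with $tr(\textbf{R}^H\textbf{R})=n_t$ forces every column norm to equal $1$. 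Symmetrically, the entry of \eqref{form_div} in antenna-row $r$ and time-column $c$ is a unit-modulus multiple of $\tau^c(a_i)$ with $i=(r-c)\bmod n_t$, and $\mathbb{E}\vert\tau^c(a_i)\vert^2=E\sum_j\vert R_{cj}\vert^2$ is the squared norm of row $c$; being independent of $r$, it makes equal energy across antennas at a fixed time slot automatic, so the remaining content of C4 is equal energy across time slots, i.e.\ all row norms equal, which with the trace constraint forces every row norm to equal $1$. The converse is immediate: unit row and column norms give $tr(\textbf{R}^H\textbf{R})=n_t$ (hence C3.1), equal column norms (hence C3.2), and equal row norms with automatic antenna-uniformity (hence C4).

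The only genuine obstacle is the index bookkeeping of the third paragraph: correctly matching the antenna, time-slot and layer indices of \eqref{form_div} to the rows and columns of $\textbf{R}$, and verifying that the diagonal matrices $\textbf{D}_i$ really do drop out of every energy expression. Once the second-moment identity and $\vert\gamma\vert^2=1$ are invoked, each criterion collapses to a one-line statement about a row sum or a column sum of $\vert R_{cj}\vert^2$, and the equivalence follows.
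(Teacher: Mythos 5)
Your proposal is correct and takes essentially the same route as the paper's proof: the identity $\mathbb{E}(\Vert\textbf{R}\textbf{s}_i\Vert^2)=E\,tr(\textbf{R}^H\textbf{R})$ pins the common value of the row-norm and column-norm sums to $n_t$, C4 forces the row norms to be equal, and C3.2 forces the column norms to be equal. If anything, you are more explicit than the paper in justifying \emph{why} C3.2 corresponds to column norms and C4 to row norms (the paper simply asserts the analogues of these two steps), so no gap remains.
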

\begin{proof}
We prove that if C3.1, C3.2 and C4 are together satisfied, then $\textbf{R}$ shall be such that all of its rows and columns have a Euclidean norm equal to unity. The converse is then easy to see. If C3.1 is satisfied, then, with $\textbf{D}_i$ unitary, we have
\begin{eqnarray}\nonumber
   \mathbb{E}( \left \Vert \textbf{s}_i\right\Vert^2)& =&\mathbb{E}\left(\left \Vert \textbf{Rs}_i\right\Vert^2  \right)=\mathbb{E}\left[ tr\left(\textbf{Rs}_i (\textbf{Rs}_i)^H\right)\right]\\
\nonumber
& = & \mathbb{E}\left[ tr\left(\textbf{Rs}_i\textbf{s}_i^H \textbf{R}^H\right)\right] = tr\left[\mathbb{E}\left( \textbf{Rs}_i\textbf{s}_i^H \textbf{R}^H\right)\right]\\ 
\nonumber
& = & tr\left[\textbf{R}\mathbb{E}\left(\textbf{s}_i\textbf{s}_i^H\right) \textbf{R}^H\right]=  tr\left[\textbf{R}(E\textbf{I}) \textbf{R}^H\right]\\
\label{eq12}
&=&  E\sum_{i=1}^{n_t}\Vert \textbf{r}_i \Vert^2
\end{eqnarray} 
where $\textbf{r}_i$ denotes the $i^{th}$ row of $\textbf{R}$. It follows that for C4 to be satisfied,
\begin{equation}\label{eq13}
 \mathbb{E}(\vert \textbf{r}_1\textbf{s}_i\vert^2)=\mathbb{E}(\vert \textbf{r}_2\textbf{s}_i\vert^2)=\cdots =\mathbb{E}(\vert \textbf{r}_{n_t}\textbf{s}_i\vert^2), 
\end{equation}
 $\forall i = 0,\cdots,n_t-1$. So, from \eqref{eq12}, \eqref{eq13} and the fact that $\mathbb{E}( \Vert \textbf{s}_i\Vert^2) = n_tE$, $\textbf{R}$ must satisfy $\Vert \textbf{r}_1 \Vert^2 = \Vert \textbf{r}_2 \Vert^2 = \cdots = \Vert \textbf{r}_{n_t} \Vert^2 = 1$. Now, denoting the $i^{th}$ column of $\textbf{R}$ by $\textbf{r}_i^\prime$, we have
\begin{eqnarray}\nonumber
   \mathbb{E}( \left \Vert \textbf{s}_i\right\Vert^2)& =&\mathbb{E}\left(\left \Vert \textbf{Rs}_i\right\Vert^2  \right) = \mathbb{E}\left[ \textbf{s}_i^H\textbf{R}^H\textbf{R}\textbf{s}_i\right]   \\
\nonumber
& = &  E\sum_{i=1}^{n_t}\Vert \textbf{r}_i^\prime \Vert^2.
\end{eqnarray} 
But C3.2 demands that $\Vert \textbf{r}_1^\prime \Vert^2=\Vert \textbf{r}_2^\prime \Vert^2=\cdots=\Vert \textbf{r}_{n_t}^\prime \Vert^2$. Hence, the Euclidean norm of each row and column of $\textbf{R}$ should be equal to unity. This concludes the proof.
\end{proof}

\begin{figure*}
\begin{equation}\label{c6}
\textbf{S} = \left[\begin{array}{rrrrrr}
                      a_0 & -\omega\tau(a_{5}) & -\omega \tau^2(a_{4}) & -\omega \tau^{3}(a_{3}) & -\omega\tau^4(a_{2}) &  -\omega\tau^5(a_{1})\\
                      a_1 & \tau(a_0) & -\omega\tau^2(a_{5}) & -\omega\tau^{3}(a_4) & -\omega\tau^4(a_{3}) &  -\omega\tau^5(a_{2})\\
                      a_2 & \tau(a_1) & \tau^2(a_0) & -\omega\tau^{3}(a_5) & -\omega\tau^4(a_{4}) &  -\omega\tau^5(a_{3})\\                      
                      a_{3} & \tau(a_{2}) & \tau^2(a_{1}) & \tau^{3}(a_0) & -\omega\tau^4(a_{5}) &  -\omega\tau^5(a_{4})\\ 
                      a_{4} & \tau(a_{3}) & \tau^2(a_{2}) & \tau^{3}(a_1) & \tau^4(a_{0}) &  -\omega\tau^5(a_{5}) \\ 
                      a_{5} & \tau(a_{4}) & \tau^2(a_{3}) & \tau^{3}(a_2) & \tau^4(a_{1}) &  \tau^5(a_{0})\\ 
                   \end{array}
\right] 
\end{equation}
\hrule
\end{figure*}

An STBC with a unitary $\textbf{R}$ obviously satisfies C3.1 and C3.2 but unitariness is not a necessary condition. In the following two sections, we highlight the significance of the modified shaping criterion by showing the existence of STBCs which do not have a unitary $\textbf{R}$ but have a higher coding gain than the perfect STBCs for $4$ and $6$ transmit antennas \cite{ORBV} which were so far unbeaten in this regard. We call these STBCs ``improved perfect STBCs'' and they are formally defined as follows.
\begin{definition}\label{imp_per}
{\it (Improved perfect STBC)} : An STBC that satisfies C1, C2, C3.1, C3.2 and C4, and has a larger normalized minimum determinant than the existing best comparable perfect STBC is called an improved perfect STBC. 
\end{definition}

\section{Improved perfect STBC for 4 Tx}\label{stbc_4}
 The improved perfect STBC for $4$ transmit antennas, which we call $\mathcal{C}_4$, was first reported in \cite{spcom} but its superior coding gain went unnoticed. $\mathcal{C}_4$ is obtained from the CDA $\mathcal{A} = (\mathbb{Q}(i,\zeta_5)/\mathbb{Q}(i), \tau:\zeta_5 \mapsto \zeta_5^2, i)$ \cite{spcom}, with $\zeta_5$ being the primitive $5^{th}$ root of unity. Its codeword matrix, prior to normalization, has the structure
\begin{equation*}\label{C4}
\textbf{S} = \left[\begin{array}{cccc}
                      a_0 & i\tau(a_{3}) & i \tau^2(a_{2}) & i \tau^{3}(a_{1})\\
                      a_1 & \tau(a_0) & i\tau^2(a_{3}) & i\tau^{3}(a_2)\\
                      a_2 & \tau(a_1) & \tau^2(a_0) & i\tau^{3}(a_3)\\                      
                      a_{3} & \tau(a_{2}) & \tau^2(a_{1}) & \tau^{3}(a_0)\\          
                   \end{array}
\right] 
\end{equation*}
where $a_i =  s_{i1} + s_{i2}\zeta_5 + s_{i2}\zeta_5^2 + s_{i2}\zeta_5^3$, $i=0,1,2,3$, and $s_{ij} \in M$-QAM. Clearly, $\mathcal{C}_4$ satisfies C1. The $\mathbb{Q}(i)$-basis is $\{1,\zeta_5, \zeta_5^2,\zeta_5^3\}$ which is also a $\mathbb{Z}[i]$-basis \cite[p. 158]{paul} for $\mathbb{Z}[i,\zeta_5]$ and $\textbf{R}$, as defined in \eqref{integral_basis}, is   
\begin{equation*}
 \frac{1}{2}\left[\begin{array}{cccc}
                      1 & \zeta_5   & \zeta_5^2 & \zeta_5^3\\
                      1 & \zeta_5^2 & \zeta_5^4 & \zeta_5\\
                      1 & \zeta_5^4 & \zeta_5^3 & \zeta_5^2\\                      
                      1 & \zeta_5^3 & \zeta_5   & \zeta_5^4\\          
                   \end{array}
\right]. 
\end{equation*}
It is clear that C3.1 and C3.2 are satisfied. Noting that $\gamma = i$ has unit modulus, $\mathcal{C}_4$ satisfies C4 as well. It only remains to be seen whether C2 is satisfied. Although this is shown in \cite{spcom}, we provide our version of the proof here for the sake of completeness and the steps of this proof will be used in the next section where the STBC for $6$ transmit antennas is discussed. We first show that $(\mathbb{Q}(i,\zeta_5)/\mathbb{Q}(i), \tau:\zeta_5 \mapsto \zeta_5^2, i)$ is a division algebra and subsequently, application of Proposition \ref{prop1} establishes that the NVD criterion is satisfied.
\begin{proposition}
$\mathcal{A} = (\mathbb{Q}(i,\zeta_5)/\mathbb{Q}(i), \tau:\zeta_5 \mapsto \zeta_5^2, i)$ is a division algebra. 
\end{proposition}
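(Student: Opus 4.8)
The plan is to verify the non-norm condition \eqref{non_norm} for this algebra: with $n=4$, $\mathbb{F}=\mathbb{Q}(i)$, $\mathbb{K}=\mathbb{Q}(i,\zeta_5)$, $\tau:\zeta_5\mapsto\zeta_5^2$ and $\gamma=i$, I would show that none of $i$, $i^2$, $i^3$ equals a relative norm $N_{\mathbb{K}/\mathbb{F}}(a)$ for any $a\in\mathbb{K}$. This is exactly the condition guaranteeing that $\mathcal{A}=(\mathbb{K}/\mathbb{F},\tau,i)$ is a division algebra, and, since $\{1,\zeta_5,\zeta_5^2,\zeta_5^3\}\subset\mathcal{O}_\mathbb{K}$, Proposition \ref{prop1} then delivers the NVD property. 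The tool I would use is reduction modulo a carefully chosen prime of $\mathbb{F}$, namely one at which $i$ reduces to a generator of the residue field's multiplicative group while every relative norm reduces trivially.

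The right prime is $\mathfrak{p}=(2+i)$, lying over the rational prime $5$. First I would record that $5=(2+i)(2-i)$ splits in $\mathbb{Q}(i)$, so $\mathfrak{p}$ has residue field $\mathbb{Z}[i]/\mathfrak{p}\cong\mathbb{Z}/5\mathbb{Z}$ and the reduction of $i$ satisfies $\overline{i}\equiv-2\equiv 3$, an element of multiplicative order $4$ in $(\mathbb{Z}/5\mathbb{Z})^\times$. Next I would use that $5$ is totally ramified in $\mathbb{Q}(\zeta_5)/\mathbb{Q}$ (with $1-\zeta_5$ a uniformizer) to conclude that $\mathfrak{p}$ is totally ramified in $\mathbb{K}/\mathbb{F}$: there is a unique prime $\mathfrak{P}$ of $\mathbb{K}$ above $\mathfrak{p}$ with ramification index $e=4$ and residue degree $f=1$.

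The heart of the argument is a reduction identity for the relative norm. Because $\mathfrak{p}$ is totally ramified, the inertia group of $\mathfrak{P}$ over $\mathfrak{p}$ is the whole Galois group $\langle\tau\rangle$; hence each $\tau^{\,j}$ acts trivially on $\mathcal{O}_\mathbb{K}/\mathfrak{P}\cong\mathbb{Z}/5\mathbb{Z}$, so for a $\mathfrak{P}$-unit $a$ one has $\tau^{\,j}(a)\equiv a\pmod{\mathfrak{P}}$ and therefore $\overline{N_{\mathbb{K}/\mathbb{F}}(a)}=\overline{a}^{\,4}=1$ by Fermat's little theorem in the five-element field. Suppose now $i^{\,t}=N_{\mathbb{K}/\mathbb{F}}(a)$ for some $t\in\{1,2,3\}$; since $f=1$ gives $v_\mathfrak{p}\!\left(N_{\mathbb{K}/\mathbb{F}}(a)\right)=v_\mathfrak{P}(a)$ while $v_\mathfrak{p}(i^{\,t})=0$, the element $a$ must be a $\mathfrak{P}$-unit, whence $\overline{i^{\,t}}=1$. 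But $\overline{i^{\,t}}=3^{\,t}$, and as $3$ generates $(\mathbb{Z}/5\mathbb{Z})^\times$ we have $3^{\,t}\neq 1$ for $t=1,2,3$; this contradiction disposes of all three values of $t$ at once and establishes \eqref{non_norm}.

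The step I expect to be the main obstacle is the ramification bookkeeping: pinning down that $\mathfrak{p}=(2+i)$ is totally ramified in $\mathbb{K}/\mathbb{F}$ with $f=1$ — equivalently, that the inertia group is all of $\langle\tau\rangle$ — since this is precisely what forces $\overline{N_{\mathbb{K}/\mathbb{F}}(a)}=\overline{a}^{\,4}$. Once the local behaviour of $5$ is settled, the remaining reduction and Fermat computations are routine, and the same template (a split prime of $\mathbb{F}$ that is totally ramified in $\mathbb{K}$, with $\gamma$ reducing to a generator of the residue field) is what I would carry over to the degree-$6$ construction of \eqref{c6}.
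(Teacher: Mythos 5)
Your proposal is correct, but it takes a genuinely different route from the paper. The paper splits the non-norm condition into cases: for $\gamma^t=\pm i$ it descends to the quadratic subfield $\mathbb{Q}(i,\sqrt{5})$ fixed by $\tau^2$, writes $N_{\mathbb{K}/\mathbb{F}}(a)=\left(a\tau^2(a)\right)\tau\left(a\tau^2(a)\right)$, and invokes the known fact that $i$ is not a norm in the Golden-code extension $\mathbb{Q}(i,\sqrt{5})/\mathbb{Q}(i)$; for $\gamma^t=-1$ it resorts to class field theory and Hasse norm symbols following \cite[Appendix IV]{ORBV}, which is what necessitates the splitting computation for $(-25+12i)\mathbb{Z}[i]$ in Appendix II. Your single local argument at $\mathfrak{p}=(2+i)$ disposes of all three values of $t$ at once and is sound: the ramification bookkeeping you flag as the main obstacle is easily settled, since $e\geq 4$ follows from $\mathbb{Q}(\zeta_5)\subset\mathbb{Q}(i,\zeta_5)$ (where $5$ is totally ramified) and $g\geq 2$ from $5=(2+i)(2-i)$ in $\mathbb{Z}[i]$, which in the degree-$8$ extension $\mathbb{Q}(i,\zeta_5)/\mathbb{Q}$ forces $e=4$, $f=1$, $g=2$ by \eqref{prime_fac}, hence total ramification of $\mathfrak{p}$ in $\mathbb{K}/\mathbb{F}$; the valuation identity $v_{\mathfrak{p}}(N_{\mathbb{K}/\mathbb{F}}(a))=f\,v_{\mathfrak{P}}(a)$ then correctly forces $a$ to be a $\mathfrak{P}$-unit, and the rest is Fermat in $\mathbb{F}_5$. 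What your approach buys is uniformity and elementarity — no appeal to class field theory, to the ORBV appendices, or to the Golden code — at the cost of requiring the standard fact that the inertia group of a totally ramified prime is the full Galois group; what the paper's approach buys is reuse of established results and a template that transfers verbatim from \cite{ORBV}. Your closing remark that the same template (a prime of $\mathbb{F}$ above the ramified rational prime, totally ramified in $\mathbb{K}$, with $\gamma$ reducing to a generator of the residue field) carries over to $\mathcal{C}_6$ is also correct: at $(3+\omega)\mathbb{Z}[\omega]$ one finds $-\omega\equiv 3\pmod{7}$, a primitive root modulo $7$.
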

\begin{proof}
 To prove that $\mathcal{A}$ is a CDA, it is sufficient to show that $N_{\mathbb{Q}(i,\zeta_5)/\mathbb{Q}(i)}(a) = \prod_{j=0}^{3}\tau^j(a) \neq i^t$, $t=1,2,3$, $\forall a \in \mathbb{Q}(i,\zeta_5)$. Thus, we have to establish that $i$, $-1$ and $-i$ are not norms in $\mathbb{Q}(i,\zeta_5)/\mathbb{Q}(i)$. Noting that $\zeta_5 + \zeta_5^{-1} = (-1+\sqrt{5})/2$, it is clear that $\mathbb{Q}(i,\sqrt{5}) \subset \mathbb{Q}(i,\zeta_5)$. Since $[\mathbb{Q}(i,\zeta_5) : \mathbb{Q}(i) ] = 4$  and $[\mathbb{Q}(i,\sqrt{5}):\mathbb{Q}(i)]=2$, by the multiplicative formula for tower of fields, $[\mathbb{Q}(i,\zeta_5) :\mathbb{Q}(i,\sqrt{5})]=2$ and $\mathbb{Q}(i,\zeta_5)/\mathbb{Q}(i,\sqrt{5})$ is a Galois extension of degree $2$. Further, since $\zeta_5^4 = \zeta_5^{-1}$, $\tau^2(\zeta_5 + \zeta_5^{-1}) = \zeta_5^{-1} + \zeta_5$ and $\tau^2$ fixes $\mathbb{Q}(i,\sqrt{5})$. So, $Gal(\mathbb{Q}(i,\zeta_5)/\mathbb{Q}(i,\sqrt{5})) = \{1,\tau^2\}$ and $Gal(\mathbb{Q}(i,\sqrt{5})/\mathbb{Q}(i)) = \left\{1,\tau_{\vert\mathbb{Q}(i,\sqrt{5})}\right\}$, where $\tau_{\vert\mathbb{Q}(i,\sqrt{5})}$ denotes ``$\tau$ restricted to $\mathbb{Q}(i,\sqrt{5})$''. If $i$ were a norm in $\mathbb{Q}(i,\zeta_5)/\mathbb{Q}(i)$, then for some $a$ in $\mathbb{Q}(i,\zeta_5)$,
\begin{eqnarray}\nonumber
  i & = &a\tau(a)\tau^2(a)\tau^3(a) \\ \label{norm_i}
    & = & \left(a\tau^2(a)\right)\tau\left(a\tau^2(a) \right).
\end{eqnarray}
But $a\tau^2(a)$ is invariant under $\tau^2$ and hence belongs to $\mathbb{Q}(i,\sqrt{5})$. So, \eqref{norm_i} implies that $i$ is a norm in $\mathbb{Q}(i,\sqrt{5})/\mathbb{Q}(i)$ which is not true \cite{BRV} since $(\mathbb{Q}(i,\sqrt{5})/\mathbb{Q}(i), \tau:\sqrt{5}\mapsto -\sqrt{5}, i)$ is a division algebra. Therefore, $i$ is not a norm in $\mathbb{Q}(i,\zeta_5)$. Likewise, $-i$ is also not a norm in $\mathbb{Q}(i,\sqrt{5})/\mathbb{Q}(i)$ (for if $a\tau(a) =-i$, then $(ia)\tau(ia) = i$ for some $a \in \mathbb{Q}(i,\sqrt{5})$ which is a contradiction) and hence not a norm in $\mathbb{Q}(i,\zeta_5)/\mathbb{Q}(i)$.

Now, it only remains to be seen that $-1$ is not a norm in $\mathbb{Q}(i,\zeta_5)/\mathbb{Q}(i)$. This is proved using class field theory whose usage in proving that a unit is not a norm in the extension field is provided in \cite[Appendix II]{ORBV}. In \cite[Appendix IV]{ORBV}, it is shown that $-1$ is not a norm in $\mathbb{Q}\left(i,2\cos\left(\frac{2\pi}{15}\right)\right)/\mathbb{Q}(i)$. The discriminant (see Appendix I of this paper) of $\mathbb{Q}(i,\zeta_5)/\mathbb{Q}(i)$ is $5^3\mathbb{Z}[i]$. The only prime ideals in $\mathbb{Z}[i]$ that are ramified in $\mathbb{Q}(i,\zeta_5)$ are the ones that divide $125\mathbb{Z}[i]$ and hence divide $5\mathbb{Z}[i]$. These are precisely the prime ideals $(2+i)\mathbb{Z}[i]$ and $(2-i)\mathbb{Z}[i]$. With these facts, the same proof given in \cite[Appendix IV]{ORBV}, with $2$ minor changes, establishes that $-1$ is not a norm in $\mathbb{Q}(i,\zeta_5)/\mathbb{Q}(i)$. The first minor change is that we need to establish that the prime ideal $(-25+12i)\mathbb{Z}[i]$ does not completely split in $\mathbb{Z}[i,\zeta_5]$ whereas in \cite[Appendix IV]{ORBV}, $(-25+12i)\mathbb{Z}[i]$ was required not to be completely split in the ring of integers of $\mathbb{Q}\left(i,2\cos\frac{2\pi}{15}\right)$. That $(-25+12i)\mathbb{Z}[i]$ does not completely split in $\mathbb{Z}[i,\zeta_5]$ is shown in Appendix II. The second change from the proof in \cite[Appendix IV]{ORBV} is that $3\mathbb{Z}[i]$ is not ramified in $\mathbb{Q}(i,\zeta_5)/\mathbb{Q}(i)$ and need not be taken into consideration for evaluating the Hasse norm symbol at ramified places.
\end{proof}

\subsection{Minimum determinant}
 The entries of all the codewords of $\mathcal{C}_4$ prior to normalization of $\textbf{R}$ by $1/2$ belong to $\mathbb{Z}[i,\zeta_5]$, the ring of integers of $\mathbb{Q}(i,\zeta_5)$, and hence the determinant of any codeword difference matrix belongs to $\mathbb{Z}[i,\zeta_5]$. From \eqref{det_f}, the determinant of any codeword difference matrix belongs to $\mathbb{Q}(i)\cap\mathbb{Z}[i,\zeta_5] = \mathbb{Z}[i]$ and so, the minimum determinant is at least $1$. But when the symbols take values from $M$-QAM with an average energy of $E$ units, the nonzero difference between any two symbols is a multiple of $2$. Taking into account a scaling factor of $\frac{1}{4\sqrt{E}}$ so that the expectation of the square of the Euclidean norm of each column of the codeword matrices is unity\footnote{For STBCs like the perfect STBCs, the average energy for transmission of symbols in each time slot is the same and the energy constraint \eqref{energy_con} translates to the requirement that the expectation of the square of the Euclidean norm of each column of codeword matrices be unity.} (see Definition \ref{min_det_stbc}), the normalized minimum determinant of $\mathcal{C}_4$ is $\left(\frac{2}{4\sqrt{E}}\right)^8 = \frac{1}{256E^4}$ which is significantly larger than the normalized minimum determinant of the perfect STBC for $4$ transmit antennas that stands at $\frac{1}{1125E^4}$ \cite{ORBV}. A result of this larger minimum determinant is a superior error performance compared to the perfect STBC and this is evident in Fig. \ref{fig1} which gives a comparison of the error performance of the two STBCs for $4$-QAM.

\section{$\mathcal{C}_6$ - Improved perfect STBC for 6 Tx} \label{stbc_6}
 $\mathcal{C}_6$ is obtained from the algebra $\mathcal{A} = (\mathbb{Q}(\omega,\zeta_7)/\mathbb{Q}(\omega), \tau:\zeta_7 \mapsto \zeta_7^3, -\omega)$ with $\zeta_7$ being the primitive $7^{th}$ root of unity. Its codeword matrix has the structure shown in \eqref{c6} at the top of the page with $a_i =  s_{i1} + s_{i2}\zeta_7 + s_{i3}\zeta_7^2 + s_{i4}\zeta_7^3+s_{i5}\zeta_7^4+s_{i6}\zeta_7^5$, $i=0,1,2,\cdots,5$, and $s_{ij} \in M$-HEX. Clearly, $\mathcal{C}_6$ is full-rate since $\{1,\zeta_7, \zeta_7^2,\zeta_7^3,\zeta_7^4,\zeta_7^5\}$ is a $\mathbb{Z}[\omega]$-basis for $\mathbb{Z}[\omega,\zeta_7]$. $\textbf{R}$ (as defined in \eqref{integral_basis}) is   
\begin{equation*}
 \frac{1}{\sqrt{6}}\left[\begin{array}{cccccc}
                      1 & \zeta_7     & \zeta_7^2  & \zeta_7^3 & \zeta_7^4 & \zeta_7^5\\
                      1 & \zeta_7^3   & \zeta_7^6  & \zeta_7^2 & \zeta_7^5 & \zeta_7  \\
                      1 & \zeta_7^2   & \zeta_7^4  & \zeta_7^6 & \zeta_7   & \zeta_7^3\\               
                      1 & \zeta_7^6   & \zeta_7^5  & \zeta_7^4 & \zeta_7^3 & \zeta_7^2\\ 
                      1 & \zeta_7^4   & \zeta_7    & \zeta_7^5 & \zeta_7^2 & \zeta_7^6\\  
                      1 & \zeta_7^5   & \zeta_7^3  & \zeta_7   & \zeta_7^6 & \zeta_7^4\\      
                   \end{array}
\right] 
\end{equation*}
and it is clear that the norm of each row and column of $\textbf{R}$ is equal to $1$. Noting that $\gamma = -\omega$ has unit modulus, $\mathcal{C}_6$ satisfies C3.1, C3.2 and C4. To show that the NVD criterion is also satisfied, it is sufficient to show that $(\mathbb{Q}(\omega,\zeta_7)/\mathbb{Q}(\omega), \tau:\zeta_7 \mapsto \zeta_7^3, -\omega)$ is a division algebra following which the application of Proposition \ref{prop1} establishes that the NVD criterion is satisfied.

\begin{figure}
\centering
\includegraphics[width=3in,height=2.5in]{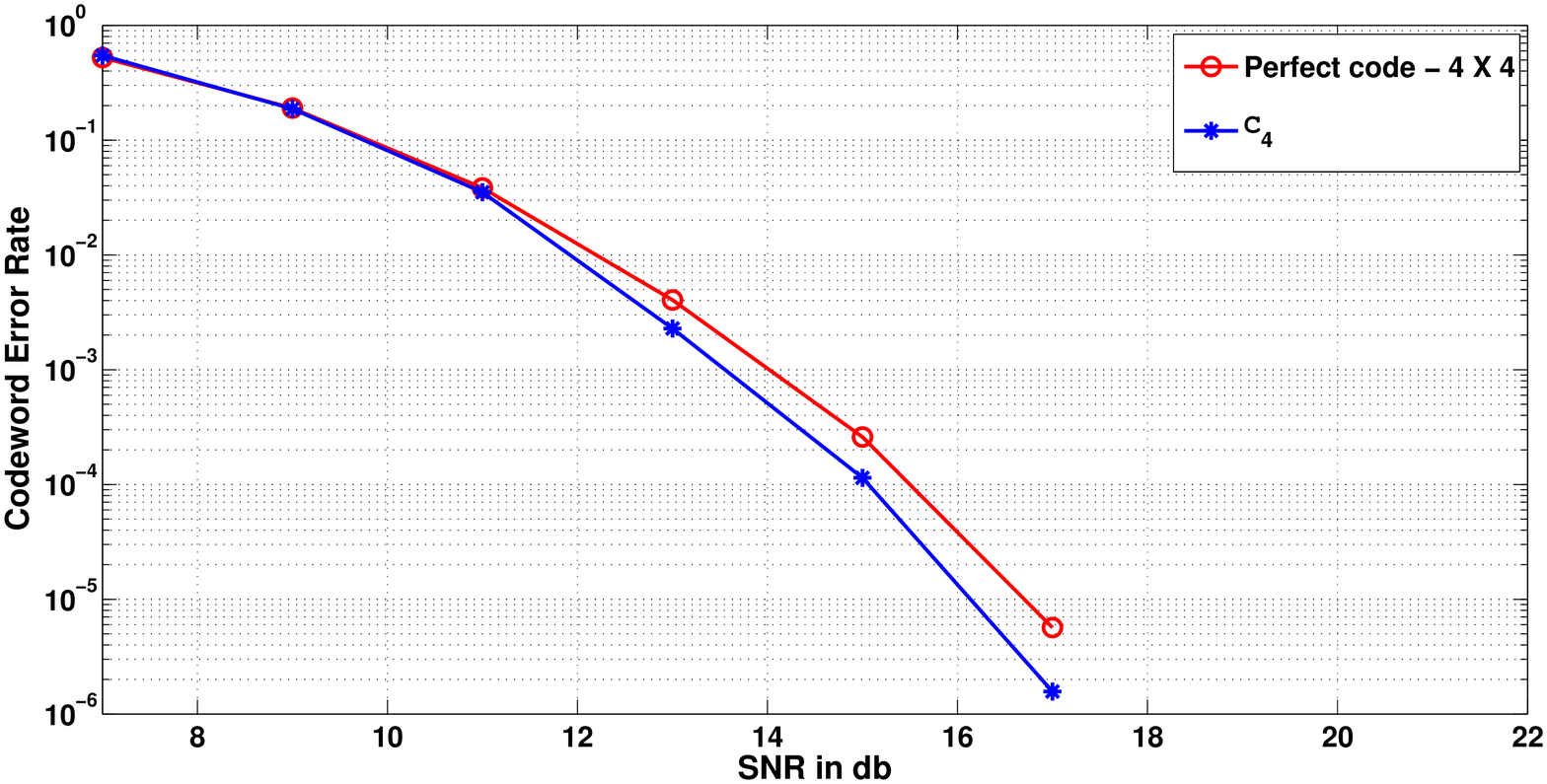}
\caption{CER performance of the Perfect STBC and $\mathcal{C}_4$ for the $4 \times 4$ system with $4$-QAM }
\label{fig1}

\end{figure}

\begin{table*}
\begin{center}
\begin{threeparttable}
\begin{tabular}{|c|c|c|c|c|} \cline{1-5}
\multirow{2}{*}{ $\#$ Tx antennas} & \multirow{2}{*}{STBC $\mathcal{S}$}  & Constellation &\multirow{2}{*}{$\delta_{min}(\mathcal{S})$} & \multirow{2}{*}{Approximately Universal?}  \\ 
& & (average energy $E$) & & \\ \hline \hline
 \multirow{4}{*}{$4$} & \multirow{2}{*}{Perfect Code \cite{ORBV}}  & \multirow{2}{*}{QAM} &   	\multirow{2}{*}{$\frac{1}{1125E^4}$}   &  \multirow{2}{*}{Yes}\\ 
& & & & \\ \cline{2-5}
 & \multirow{2}{*}{$\mathcal{C}_4$ \cite{spcom}} & \multirow{2}{*}{QAM} & \multirow{2}{*}{$ \frac{1}{256E^4}$} & \multirow{2}{*}{Yes} \\ 
& & & & \\ \hline \hline
 \multirow{4}{*}{$6$} & \multirow{2}{*}{Perfect STBC \cite{ORBV}} & \multirow{2}{*}{HEX} &  \multirow{2}{*}{$\frac{1}{3^67^5E^6} \leq \delta_{min}\leq \frac{1}{3^67^4E^6}$} & \multirow{2}{*}{Yes} \\ 
& & & & \\ \cline{2-5}
 & \multirow{2}{*}{$ \mathcal{C}_6 $ } & \multirow{2}{*}{HEX} & \multirow{2}{*}{$\frac{1}{3^{12}E^6}$} & \multirow{2}{*}{Yes} \\
 &  &  & & \\  \hline
\end{tabular}
\end{threeparttable}
\end{center}
\caption{Comparison between the improved perfect STBCs and the perfect STBCs.}
\label{tab1}
\hrule
\end{table*}

\begin{proposition}
$\mathcal{A} = (\mathbb{Q}(\omega,\zeta_7)/\mathbb{Q}(\omega), \tau:\zeta_7 \mapsto \zeta_7^3, -\omega)$ is a division algebra. 
\end{proposition}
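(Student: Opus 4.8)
The plan is to verify the CDA criterion \eqref{non_norm} directly: it suffices to show that none of the five nontrivial sixth roots of unity $(-\omega)^{t}$, $t=1,\dots,5$ --- explicitly $-\omega,\ \omega^{2},\ -1,\ \omega,\ -\omega^{2}$ --- is a norm from $\mathbb{K}=\mathbb{Q}(\omega,\zeta_7)$ to $\mathbb{F}=\mathbb{Q}(\omega)$. Since $\mathbb{K}/\mathbb{F}$ is cyclic of degree $6$, the natural tool is the Hasse Norm Theorem: an element $x\in\mathbb{F}^{\times}$ is a global norm if and only if it is a local norm at every place of $\mathbb{F}$. Each $(-\omega)^{t}$ is a root of unity, hence a unit at every finite place and trivially a norm at the unique (complex) archimedean place of $\mathbb{F}$; moreover units are automatically local norms at the finite places where $\mathbb{K}/\mathbb{F}$ is unramified. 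The relative discriminant of $\mathbb{K}=\mathbb{F}(\zeta_7)$ over $\mathbb{F}$ is supported only on the primes dividing $7\mathcal{O}_\mathbb{F}$ (it divides $7^{5}\mathcal{O}_\mathbb{F}$, inherited from $\mathrm{disc}\,\mathbb{Q}(\zeta_7)/\mathbb{Q}=-7^{5}$), so the entire problem collapses to a single local computation at the primes above $7$.

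Because $7\equiv 1\pmod 3$, the rational prime $7$ splits in $\mathbb{Q}(\omega)$ as $\pi\bar\pi$, so the completion $\mathbb{F}_{\pi}$ is $\mathbb{Q}_{7}$ and the local extension $\mathbb{K}_{\mathfrak{P}}/\mathbb{F}_{\pi}$ is $\mathbb{Q}_{7}(\zeta_7)/\mathbb{Q}_{7}$, which is totally ramified of degree $6$. By local class field theory this cyclotomic extension has conductor $(7)$, so its norm group is $7^{\mathbb{Z}}\times(1+7\mathbb{Z}_{7})$; in particular the local \emph{units} that are norms are exactly the principal units $\equiv 1\pmod 7$. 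Now each nontrivial sixth root of unity reduces modulo $\pi$ to a nontrivial element of $(\mathbb{Z}/7)^{\times}$ (indeed $\omega$ reduces to a primitive cube root of unity, namely $2$ or $4$, and $-1$ to $6$), so none of $-\omega,\omega^{2},-1,\omega,-\omega^{2}$ is $\equiv 1\pmod 7$. Hence none is a local norm at $\pi$, and by the Hasse Norm Theorem none is a global norm. This establishes \eqref{non_norm} and proves that $\mathcal{A}$ is a division algebra.

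The step I expect to be the main obstacle is the input from local class field theory --- identifying the conductor of $\mathbb{Q}_{7}(\zeta_7)/\mathbb{Q}_{7}$ as $(7)$, equivalently pinning down the local norm group so that the unit norms are precisely the principal units; everything else is bookkeeping about ramification and residues. If one prefers to stay closer to the $4$-antenna argument and avoid quoting the local norm group, a viable alternative is the tower method: write $N_{\mathbb{K}/\mathbb{F}}(a)=N_{\mathbb{L}_2/\mathbb{F}}\!\left(a\,\tau^{2}(a)\,\tau^{4}(a)\right)$ with $\mathbb{L}_2=\mathbb{Q}(\omega,\sqrt{-7})$ the fixed field of $\tau^{2}$, reducing the non-norm claims for $-\omega,-1,-\omega^{2}$ to the quadratic extension $\mathbb{L}_2/\mathbb{F}$ where they follow from a Hilbert-symbol computation at $7$ (together with an $-i$-type trick passing from $-\omega$ to $-\omega^{2}$ via $\omega^{2}=N_{\mathbb{L}_2/\mathbb{F}}(\omega)$), and similarly write $N_{\mathbb{K}/\mathbb{F}}(a)=N_{\mathbb{L}_3/\mathbb{F}}\!\left(a\,\tau^{3}(a)\right)$ with $\mathbb{L}_3$ the cubic fixed field of $\tau^{3}$ to dispose of $\omega,\omega^{2}$. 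This mirrors the structure of the proof given for $\mathcal{C}_4$, with the prime $7$ and its splitting in $\mathbb{Z}[\omega,\zeta_7]$ playing the role of $5$ in the $4$-antenna case, as in \cite[Appendix IV]{ORBV}.
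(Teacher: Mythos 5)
Your proof is correct, and it takes a genuinely different route from the paper's. The paper splits the five roots of unity into two groups: $\pm\omega$ and $\pm\omega^{2}$ are eliminated by writing $N_{\mathbb{K}/\mathbb{F}}(a)$ as the norm from the cubic subfield $\mathbb{Q}(\omega,\zeta_7+\zeta_7^{-1})$ of the element $a\tau^{3}(a)$ and then citing \cite{BRV} for the fact that $(\mathbb{Q}(\omega,\zeta_7+\zeta_7^{-1})/\mathbb{Q}(\omega),\tau^{2},\omega)$ is a division algebra, while $-1$ is handled separately by adapting the class-field-theoretic Hasse-symbol computation of \cite[Appendix V]{ORBV} --- which is precisely what forces the paper's Appendix III (showing $(3-8\omega)\mathbb{Z}[\omega]$ does not split completely in $\mathbb{Z}[\omega,\zeta_7]$). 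You instead give one uniform local argument: all five candidates are units, the extension is ramified only above $7$, units are automatically local norms at unramified finite places and at the complex place, and at a prime above $7$ the completion is $\mathbb{Q}_7(\zeta_7)/\mathbb{Q}_7$, whose unit norms are exactly the principal units $1+7\mathbb{Z}_7$; since reduction is injective on roots of unity of order prime to $7$, no nontrivial sixth root of unity lands in $1+7\mathbb{Z}_7$. Note you only need the trivial direction (global norm $\Rightarrow$ local norm at $\pi$), not the full Hasse Norm Theorem, and the local class field theory input you flag as the main obstacle is a standard fact. What your route buys is uniformity and self-containedness --- it dispenses with the citation of \cite{BRV}, the adaptation of \cite[Appendix V]{ORBV}, and Appendix III entirely; what the paper's route buys is that it stays elementary modulo results already established in the perfect-code literature. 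Your fallback ``tower method'' for $\pm\omega,\pm\omega^{2}$ is essentially the paper's argument; your proposed descent of $-1$ to $\mathbb{Q}(\omega,\sqrt{-7})$ with a Hilbert symbol at $7$ also works (since $-1\equiv 6$ is a nonsquare modulo $7$) and is again lighter than the paper's treatment of $-1$.
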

\begin{proof}
 To prove that $\mathcal{A}$ is a CDA, it is sufficient to show that $N_{\mathbb{Q}(\omega,\zeta_7)/\mathbb{Q}(\omega)}(a) = \prod_{j=0}^{5}\tau^j(a) \neq (-\omega)^t$, $t=1,2,\cdots,5$, $\forall a \in \mathbb{Q}(\omega,\zeta_7)$. Hence, it is to be established that $\pm \omega$, $\pm \omega^2$, $-1$ are not norms in $\mathbb{Q}(\omega,\zeta_7)/\mathbb{Q}(\omega)$. We note that $\mathbb{Q}(\omega,\zeta_7+\zeta_7^{-1}) \subset \mathbb{Q}(\omega,\zeta_7)$. Since $[\mathbb{Q}(\omega,\zeta_7) : \mathbb{Q}(\omega) ] = 6$  and $[\mathbb{Q}(\omega,\zeta_7+\zeta_7^{-1}):\mathbb{Q}(\omega)]=3$, by the multiplicative formula for tower of fields, $[\mathbb{Q}(\omega,\zeta_7) :\mathbb{Q}(\omega,\zeta_7+\zeta_7^{-1})]=2$ and $\mathbb{Q}(\omega,\zeta_7)/\mathbb{Q}(\omega,\zeta_7+\zeta_7^{-1})$ is a Galois extension of degree $2$. Further, $\tau^3(\zeta_7 + \zeta_7^{-1}) = \zeta_7^{-1} + \zeta_7$ (since $\zeta_7^{-1} = \zeta_7^6$) and $\tau^3$ fixes $\mathbb{Q}(\omega,\zeta_7+\zeta_7^{-1})$. So, $Gal(\mathbb{Q}(\omega,\zeta_7)/\mathbb{Q}(\omega,\zeta_7+\zeta_7^{-1})) = \{1,\tau^3\}$ and $Gal(\mathbb{Q}(\omega,\zeta_7+\zeta_7^{-1})/\mathbb{Q}(\omega)) = \left\{1,\tau_{\vert\mathbb{Q}(\omega,\zeta_7+\zeta_7^{-1})},\tau^2_{\vert\mathbb{Q}(\omega,\zeta_7+\zeta_7^{-1})}\right\}$. So, if $\pm\omega$ were a norm in $\mathbb{Q}(\omega,\zeta_7)/\mathbb{Q}(\omega)$, then for some $a$ in $\mathbb{Q}(\omega,\zeta_7)$,
\begin{eqnarray}\nonumber
 \pm \omega & = &a\tau(a)\tau^2(a)\tau^3(a)\tau^4(a)\tau^5(a) \\ \label{norm_omega}
    & = & \left(a\tau^3(a)\right)\tau\left(a\tau^3(a) \right)\tau^2\left(a\tau^3(a) \right).
\end{eqnarray}
But $a\tau^3(a)$ is invariant under $\tau^3$ and hence belongs to $\mathbb{Q}(\omega,\zeta_7+\zeta_7^{-1})$. So, \eqref{norm_omega} implies that $\omega$ is a norm in $\mathbb{Q}(\omega,\zeta_7+\zeta_7^{-1})/\mathbb{Q}(\omega)$ which is not true \cite{BRV} since $(\mathbb{Q}(\omega,\zeta_7+\zeta_7^{-1})/\mathbb{Q}(\omega), \tau^2:\zeta_7+\zeta_7^{-1} \mapsto \zeta_7^2+\zeta_7^{-2}, \omega)$ is a division algebra ($-\omega$ not being a norm naturally follows). Therefore, $\pm \omega$ is not a norm in $\mathbb{Q}(\omega,\zeta_7)$. Likewise, $\pm \omega^2$ is also not a norm in $\mathbb{Q}(\omega,\zeta_7+\zeta_7^{-1})/\mathbb{Q}(\omega)$ and hence not a norm in $\mathbb{Q}(\omega,\zeta_7)/\mathbb{Q}(\omega)$.

Now, it only remains to be seen that $-1$ is not a norm in $\mathbb{Q}(\omega,\zeta_7)/\mathbb{Q}(\omega)$. This is again proved using class field theory. In \cite[Appendix V]{ORBV}, it is shown that $-1$ is not a norm in $\mathbb{Q}\left((\omega,2\cos\left(\frac{2\pi}{28}\right)\right)/\mathbb{Q}(\omega))$. The discriminant of $\mathbb{Q}(\omega,\zeta_7)/\mathbb{Q}(\omega)$ is $ 7^5\mathbb{Z}[\omega]$. The only prime ideals in $\mathbb{Z}(\omega)$ that are ramified in $\mathbb{Q}(\omega,\zeta_7)$ are the ones that divide $7^5\mathbb{Z}[\omega]$ and hence divide $7\mathbb{Z}[\omega]$. These are precisely the prime ideals $(3+\omega)\mathbb{Z}[\omega]$ and $(2-\omega)\mathbb{Z}[\omega]$. Using these facts, the same proof given in \cite[Appendix V]{ORBV}, with $2$ minor changes, establishes that $-1$ is not a norm in $\mathbb{Q}(\omega,\zeta_7)/\mathbb{Q}(\omega)$. The first change is that we are required to show that the prime ideal $(3-8\omega)\mathbb{Z}[\omega]$ is not completely split in $\mathbb{Z}[\omega,\zeta_7]$ whereas in \cite[Appendix V]{ORBV}, $(3-8\omega)\mathbb{Z}[\omega]$ was required to be not completely split in the ring of integers of $\mathbb{Q}\left(\omega,2\cos\frac{2\pi}{28}\right)$. It is shown in Appendix III of this paper that $(3-8\omega)\mathbb{Z}[\omega]$ is not completely split in $\mathbb{Z}[\omega,\zeta_7]$. The second change from the proof in \cite[Appendix V]{ORBV} is that $2\mathbb{Z}[\omega]$ is not ramified in $\mathbb{Q}(\omega,\zeta_7)/\mathbb{Q}(\omega)$ and need not be taken into consideration for evaluating the Hasse norm symbol at ramified places.
\end{proof}

\subsection{Minimum Determinant}

The entries of all the codewords of $\mathcal{C}_6$, prior to normalization of $\textbf{R}$ by $1/\sqrt{6}$, belong to $\mathbb{Z}[\omega,\zeta_7]$, the ring of integers of $\mathbb{Q}(\omega,\zeta_7)$, and hence the determinant of any codeword difference matrix belongs to $\mathbb{Q}(\omega)\cap\mathbb{Z}[\omega,\zeta_7] = \mathbb{Z}[\omega]$. So, the minimum determinant is guaranteed to be at least $1$. But since the symbols take values from $M$-HEX with an average energy of $E$ units, the nonzero difference between any two symbols is a multiple of $2$. Taking into account a normalizing factor of $\frac{1}{6\sqrt{E}}$, the normalized minimum determinant of $\mathcal{C}_6$ is $\left(\frac{2}{6\sqrt{E}}\right)^{12} = \frac{1}{3^{12}E^6}$ which is significantly larger than the normalized minimum determinant of the perfect STBC for $6$ transmit antennas that is upper bounded by $\frac{1}{3^67^4E^6}$ \cite{ORBV}. The normalized minimum determinants of the improved perfect STBCs and the perfect STBCs are tabulated in Table \ref{tab1}.

{\it Remarks}: We have restricted our construction of the improved perfect STBCs to just $4$ and $6$ transmit antennas. The usage of cyclotomic extensions of $\mathbb{Q}(i)$ and $\mathbb{Q}(\omega)$ was the reason we were able to obtain STBCs with larger normalized minimum determinants than that of perfect STBCs for $4$ and $6$ transmit antennas. However, for $n_t=2,3$, one cannot obtain CDAs of degree $n_t$ over $\mathbb{Q}(i)$ or $\mathbb{Q}(\omega)$ using cyclotomic extensions (with $\zeta_3 =\omega$, $(\mathbb{Q}(i,\omega)/\mathbb{Q}(i),\tau:\omega \to \omega^2, i)$ is not a division algebra). So, for $2$ and $3$ transmit antennas, the existing perfect STBCs \cite{ORBV} remain the best with respect to coding gain. For other values of $n_t$, $\gamma$ cannot be a unit in $\mathbb{Z}[i]$ or $\mathbb{Z}[\omega]$ for the algebra to be a division algebra. However, the approach taken in \cite{new_per}, where $\gamma$ is not restricted to be in $\mathbb{Z}[i]$ or $\mathbb{Z}[\omega]$, can still be taken to investigate if new STBCs with larger coding gains can be obtained for arbitrary number of transmit antennas.

\section{Concluding Remarks}
In this paper, we presented a modified shaping criterion in the design of STBCs that enabled us to propose two STBCs, one each for $4$ and $6$ transmit antennas, that have the best known normalized minimum determinants in their comparable class. This shaping criterion can be employed to see if better STBCs, in terms of coding gain, can be obtained for arbitrary number of transmit antennas. 

\section*{Appendix I}
\begin{center}
\textsc{Number Theory basics and definitions}
\end{center}

We consider a number field $\mathbb{F}$ that is a finite extension of $\mathbb{Q}$. Its ring of integers $\mathcal{O}_{\mathbb{F}}$ is given by $\mathcal{O}_{\mathbb{F}} =  \{a \in \mathbb{F} ~\vert~ f(a) = 0, f \in \mathbb{Z}_{monic}[X]\}$ where $\mathbb{Z}_{monic}[X]$ is the set of {\it monic} polynomials in the variable $X$ with coefficients in $\mathbb{Z}$. Let the Galois extension of $\mathbb{F}$ of degree $n$ be denoted by $\mathbb{K}$ whose ring of integers is denoted by $\mathcal{O}_\mathbb{K}$ and $Gal(\mathbb{K}/\mathbb{F}) = \{\sigma_1,\sigma_2,\cdots,\sigma_n\}$. It is well-known that for any $a$ in $\mathbb{K}$, if $\sigma_i(a) = a$, $\forall i=1,\cdots,n$, then $a \in \mathbb{F}$. Let $\{\theta_1,\theta_2,\cdots,\theta_n \}$ be the $\mathcal{O}_{\mathbb{F}}$-basis of $\mathcal{O}_\mathbb{K}$.

 {\it Trace of an element}: The trace of an element $a$ in $\mathbb{K}/\mathbb{F}$, denoted by $T_{\mathbb{K}/\mathbb{F}}(a)$, is $\sum_{i=1}^{n}\sigma_i(a)$ and belongs to $\mathbb{F}$.

 {\it Norm of an element}: The norm of an element $a $ in $\mathbb{K}/\mathbb{F}$, denoted by $N_{\mathbb{K}/\mathbb{F}}(a)$, is $\prod_{i=1}^{n}\sigma_i(a)$ and belongs to $\mathbb{F}$.

 {\it Discriminant of a basis} \cite[p. 25]{paul}: For a chosen $\mathbb{F}$-basis $\{b_1,b_2,\cdots,b_n\}$, its discriminant, denoted by $\Delta(b_1,b_2,\cdots,b_n)$, is the determinant of the $n\times n$ matrix $\textbf{M}$ whose $(i,j)^{th}$ entry is $T_{\mathbb{K}/\mathbb{F}}(b_ib_j)$. 

 {\it Discriminant of $\mathbb{K}/\mathbb{F}$} \cite[p. 148]{paul}: The discriminant of $\mathbb{K}/\mathbb{F}$ is the ideal $\Delta(\theta_1,\theta_2,\cdots,\theta_n)\mathcal{O}_\mathbb{F}$.

 {\it Prime ideal}: An ideal $\mathfrak{p}$ of a ring $\mathcal{R}$ is a prime ideal if it has the following properties.
\begin{itemize}
 \item If $a,b \in \mathcal{R}$ such that $ab \in \mathfrak{p}$, then either $a \in \mathfrak{p}$ or $b \in \mathfrak{p}$.
\item $\mathfrak{p}$ is not $\mathcal{R}$ itself.
 \end{itemize}
A nonzero principal ideal is prime if and only if it is generated by a prime element.

{\it Prime elements of $\mathbb{Z}[i]$}: A Gaussian integer $a+ib$, $a,b \in \mathbb{Z}$ is a Gaussian prime if and only if either
\begin{itemize}
 \item one of $a$, $b$ is zero and the other is a prime number of the form $\pm(4n+3)$, with $n$ a nonnegative integer, or 
\item both $a$ and $b$ are nonzero and $a^2+b^2$ is a prime number (which will not be of the form $4n+3$).
\end{itemize}

{\it Prime elements of $\mathbb{Z}[\omega]$}: An Eisenstein integer $z = a+\omega b$, $a,b \in \mathbb{Z}$ is an Eisenstein prime if and only if either
\begin{itemize}
 \item one of $a$, $b$ is zero and $z$ is equal to the product of a unit and a natural prime of the form $3n-1$, or
\item both $a$ and $b$ are nonzero and $\vert z \vert ^2 = a^2 -ab + b^2$ is a natural prime (which is necessarily congruent to $0$ or $1$ mod $3$).
\end{itemize} 

{\it Relative prime ideals}: Ideals $A$ and $B$ of a ring $\mathcal{R}$ are said to be relatively prime (coprime) if $A+B = \mathcal{R}$. It follows that coprime ideals $A$ and $B$ of $\mathcal{R}$ satisfy $AB = A\cap B$. 

{\it Dedekind domain}: An integral domain $\mathcal{R}$ which is not a field is called a Dedekind domain if every nonzero proper ideal factors into prime ideals. The ring of integers of a number field is a Dedekind domain.

 {\it Ideal factorization in extensions} \cite[p. 144]{paul}: Let $\mathfrak{p}$ be a nonzero prime ideal in $\mathbb{O}_\mathbb{F}$. Then, in the extension field $\mathbb{K}$ (not necessarily a Galois extension),  
\begin{equation*}
 \mathfrak{p}\mathcal{O}_\mathbb{K} = \prod_{i=1}^{g}\mathfrak{B}_i^{e(\mathfrak{B}_i/\mathfrak{p})}
\end{equation*}
where $\mathfrak{B}_i \subset \mathcal{O}_{\mathbb{K}}$ are prime ideals (finite in number) in $\mathcal{O}_{\mathbb{K}}$, $e(\mathfrak{B}_i/\mathfrak{p})$ is a non-negative integer called the {\it ramification index} of $\mathfrak{B}_i$ over $\mathfrak{p}$ and is the exact power of $\mathfrak{B}_i$ that divides $\mathfrak{p}\mathcal{O}_\mathbb{K}$. $\mathfrak{B}_i$ is {\it said to lie above} $\mathfrak{p}$ in $\mathcal{O}_\mathbb{K}$. This factorization is {\it unique} up to order of the factors since $\mathcal{O}_\mathbb{K}$ is a Dedekind domain. 

{\it Inertia degree or residue class degree} \cite[p. 105]{paul}: Let $\mathfrak{p}$ be a prime ideal in $\mathcal{O}_\mathbb{F}$ that factors into prime ideals in $\mathcal{O}_\mathbb{K} $ as $\mathfrak{p}\mathbb{O}_\mathbb{K} = \prod_{i=1}^{g}\mathfrak{B}_i^{e(\mathfrak{B}_i/\mathfrak{p})}$. Then, the inertia degree $f(\mathfrak{B}_i/\mathfrak{p})$ of $\mathfrak{B}_i$ over $\mathfrak{p}$ is a non-negative integer given by 
\begin{equation*}
 f(\mathfrak{B}_i/\mathfrak{p}) = [\mathcal{O}_\mathbb{K}/\mathfrak{B}_i:\mathcal{O}_\mathbb{F}/\mathfrak{p}].
\end{equation*}
It follows that \cite[p. 144]{paul}
\begin{equation*}
 [\mathbb{K}:\mathbb{F}] = \sum_{i=1}^{g}e(\mathfrak{B}_i/\mathfrak{p})f(\mathfrak{B}_i/\mathfrak{p}).
\end{equation*}

\begin{corollary}\label{tower_mul}
 \cite[p. 191]{paulo} Consider a tower of field extensions $\mathbb{F} \subset \mathbb{K} \subset \mathbb{L}$ with the ring of integers $\mathcal{O}_\mathbb{F} \subset \mathcal{O}_\mathbb{K} \subset \mathcal{O}_\mathbb{L}$. Let $\mathfrak{p}$ be a prime ideal of $\mathcal{O}_\mathbb{F}$, $\mathfrak{B}_\mathbb{K}$ a prime ideal of $\mathcal{O}_\mathbb{K}$ lying above $\mathfrak{p}$ and $\mathfrak{B}_\mathbb{L}$ a prime ideal of $\mathcal{O}_\mathbb{L}$ lying above $\mathfrak{B}_\mathbb{K}$. Then, the ramification index and inertia degree are multiplicative in the tower, i.e.,
\begin{eqnarray*}
 e(\mathfrak{B}_\mathbb{L}/\mathfrak{p}) & = &e(\mathfrak{B}_\mathbb{L}/\mathfrak{B}_\mathbb{K})e(\mathfrak{B}_\mathbb{K}/\mathfrak{p})\\
 f(\mathfrak{B}_\mathbb{L}/\mathfrak{p}) & = &f(\mathfrak{B}_\mathbb{L}/\mathfrak{B}_\mathbb{K})f(\mathfrak{B}_\mathbb{K}/\mathfrak{p}).
\end{eqnarray*}
\end{corollary}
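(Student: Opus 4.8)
The plan is to prove the two multiplicativity relations separately, exploiting the two different definitions recalled just above the corollary: $e$ is read off from the exponent in the prime factorization of an extended ideal, while $f$ is a residue-field degree. For the ramification index I would start from the factorization in $\mathcal{O}_\mathbb{K}$, namely $\mathfrak{p}\mathcal{O}_\mathbb{K} = \prod_j \mathfrak{Q}_j^{e(\mathfrak{Q}_j/\mathfrak{p})}$, where the $\mathfrak{Q}_j$ are the primes of $\mathcal{O}_\mathbb{K}$ lying above $\mathfrak{p}$ and one of them, say $\mathfrak{Q}_1$, equals $\mathfrak{B}_\mathbb{K}$. Extending this ideal identity to $\mathcal{O}_\mathbb{L}$ and using that ideal multiplication is preserved under extension, I obtain $\mathfrak{p}\mathcal{O}_\mathbb{L} = \prod_j (\mathfrak{Q}_j\mathcal{O}_\mathbb{L})^{e(\mathfrak{Q}_j/\mathfrak{p})}$. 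Factoring each $\mathfrak{Q}_j\mathcal{O}_\mathbb{L}$ further into primes of $\mathcal{O}_\mathbb{L}$ then yields a single prime factorization of $\mathfrak{p}\mathcal{O}_\mathbb{L}$, and by uniqueness of factorization in the Dedekind domain $\mathcal{O}_\mathbb{L}$ the exponent of $\mathfrak{B}_\mathbb{L}$ in this product is precisely $e(\mathfrak{B}_\mathbb{L}/\mathfrak{p})$.

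The crux is then to show that $\mathfrak{B}_\mathbb{L}$ contributes only through the factor coming from $\mathfrak{Q}_1 = \mathfrak{B}_\mathbb{K}$, and this is where I expect the main obstacle to lie. I would invoke the fact that every prime of $\mathcal{O}_\mathbb{L}$ lies above exactly one prime of $\mathcal{O}_\mathbb{K}$, namely its contraction $\mathfrak{B}_\mathbb{L} \cap \mathcal{O}_\mathbb{K}$. Since $\mathfrak{B}_\mathbb{L}$ lies above $\mathfrak{B}_\mathbb{K}$ by hypothesis, its contraction is $\mathfrak{B}_\mathbb{K} = \mathfrak{Q}_1$, so $\mathfrak{B}_\mathbb{L}$ appears in the factorization of $\mathfrak{Q}_j\mathcal{O}_\mathbb{L}$ for $j=1$ only, and there with exponent $e(\mathfrak{B}_\mathbb{L}/\mathfrak{B}_\mathbb{K})$. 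Combining, the exponent of $\mathfrak{B}_\mathbb{L}$ in $\mathfrak{p}\mathcal{O}_\mathbb{L}$ equals $e(\mathfrak{B}_\mathbb{L}/\mathfrak{B}_\mathbb{K})\, e(\mathfrak{B}_\mathbb{K}/\mathfrak{p})$, which is the first claimed identity.

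For the inertia degree the approach is purely a counting of field-extension degrees. The contractions $\mathfrak{B}_\mathbb{L} \cap \mathcal{O}_\mathbb{K} = \mathfrak{B}_\mathbb{K}$ and $\mathfrak{B}_\mathbb{K} \cap \mathcal{O}_\mathbb{F} = \mathfrak{p}$ guarantee that the ring maps induced by the inclusions $\mathcal{O}_\mathbb{F} \hookrightarrow \mathcal{O}_\mathbb{K} \hookrightarrow \mathcal{O}_\mathbb{L}$ descend to injections of residue fields $\mathcal{O}_\mathbb{F}/\mathfrak{p} \hookrightarrow \mathcal{O}_\mathbb{K}/\mathfrak{B}_\mathbb{K} \hookrightarrow \mathcal{O}_\mathbb{L}/\mathfrak{B}_\mathbb{L}$, since in each case the kernel of the map is exactly the contracted prime and hence the zero ideal in the quotient. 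This realizes the three residue fields as a tower of finite field extensions.

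Applying the multiplicative degree formula (tower law) for field extensions to this residue tower gives $[\mathcal{O}_\mathbb{L}/\mathfrak{B}_\mathbb{L} : \mathcal{O}_\mathbb{F}/\mathfrak{p}] = [\mathcal{O}_\mathbb{L}/\mathfrak{B}_\mathbb{L} : \mathcal{O}_\mathbb{K}/\mathfrak{B}_\mathbb{K}]\, [\mathcal{O}_\mathbb{K}/\mathfrak{B}_\mathbb{K} : \mathcal{O}_\mathbb{F}/\mathfrak{p}]$, which is exactly the second claimed identity $f(\mathfrak{B}_\mathbb{L}/\mathfrak{p}) = f(\mathfrak{B}_\mathbb{L}/\mathfrak{B}_\mathbb{K})\, f(\mathfrak{B}_\mathbb{K}/\mathfrak{p})$. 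The inertia part presents no real difficulty once the residue-field inclusions are in place; all the subtlety is concentrated in the ramification argument, specifically in isolating the unique $\mathcal{O}_\mathbb{K}$-prime lying beneath $\mathfrak{B}_\mathbb{L}$.
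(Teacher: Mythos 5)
Your proof is correct. The paper offers no proof of this corollary at all---it is quoted verbatim with a citation to Ribenboim---and your argument is precisely the standard textbook proof being deferred to: for $e$, extending $\mathfrak{p}\mathcal{O}_\mathbb{K}=\prod_j \mathfrak{Q}_j^{e(\mathfrak{Q}_j/\mathfrak{p})}$ to $\mathcal{O}_\mathbb{L}$, using that each prime of $\mathcal{O}_\mathbb{L}$ contracts to a unique prime of $\mathcal{O}_\mathbb{K}$ so that $\mathfrak{B}_\mathbb{L}$ occurs only in the factor $\mathfrak{B}_\mathbb{K}\mathcal{O}_\mathbb{L}$, and invoking unique factorization in the Dedekind domain $\mathcal{O}_\mathbb{L}$; and for $f$, the tower law applied to the residue-field inclusions $\mathcal{O}_\mathbb{F}/\mathfrak{p} \hookrightarrow \mathcal{O}_\mathbb{K}/\mathfrak{B}_\mathbb{K} \hookrightarrow \mathcal{O}_\mathbb{L}/\mathfrak{B}_\mathbb{L}$, which are well defined exactly because the contracted primes are the kernels.
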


For Galois extensions $\mathbb{K}/\mathbb{F}$, $e(\mathfrak{B}_1/\mathfrak{p}) = e(\mathfrak{B}_2/\mathfrak{p}) = \cdots = e(\mathfrak{B}_g/\mathfrak{p})$ and $f(\mathfrak{B}_1/\mathfrak{p}) = f(\mathfrak{B}_2/\mathfrak{p}) = \cdots = f(\mathfrak{B}_g/\mathfrak{p})$ \cite[p. 152]{paul}. In such a case, we simply denote the ramification index and the inertia degree by $e$ and $f$, respectively, and 
\begin{equation}\label{prime_fac}
 [\mathbb{K}:\mathbb{F}] = n = efg.
\end{equation}

{\it Definition}: Let $\mathfrak{p}$ be a prime ideal in $\mathcal{O}_\mathbb{F}$ that factors into prime ideals of $\mathcal{O}_\mathbb{K}$ in the Galois extension field $\mathbb{K}$ as $\mathfrak{p}\mathbb{O}_\mathbb{K} = \prod_{i=1}^{g}\mathfrak{B}_i^e$ with an inertia degree $f$. Then,
\begin{itemize}
 \item $\mathfrak{p}$ is {\it ramified} in $\mathbb{K}$ if $e >1$.
\item $\mathfrak{p}$ is {\it totally ramified} in $\mathbb{K}$ if $e =n$, $g=1$, $f=1$.
\item $\mathfrak{p}$ {\it splits} in $\mathcal{O}_\mathbb{K}$ if $g>1$.
\item $\mathfrak{p}$ {\it splits completely} in $\mathcal{O}_\mathbb{K}$ if $e =1$, $g=n$, $f=1$.
\item $\mathfrak{p}$ is {\it inert} in $\mathcal{O}_\mathbb{K}$ if $e =1$, $g=1$.
\end{itemize}

{\it Corollary} \cite[P. 148]{paul}: A prime ideal $\mathfrak{p}$ of $\mathcal{O}_\mathbb{F}$ is ramified in $\mathbb{K}$ if and only if it divides the discriminant of $\mathbb{K}/\mathbb{F}$. 

Let $\theta \in \mathcal{O}_\mathbb{K}$ such that $\mathbb{K} = \mathbb{F}(\theta)$ (not necessarily a Galois extension) with the minimal polynomial of $\theta$ being $p(X) \in \mathcal{O}_\mathbb{F}[X]$. The {\it conductor} of the ring $\mathcal{O}_\mathbb{F}[\theta]$ is the largest ideal $\mathfrak{F}$ of $\mathcal{O}_\mathbb{K}$ that is contained in $\mathcal{O}_\mathbb{F}[\theta]$.
 
 \begin{proposition} \label{prop4}\cite[p. 47]{neukirch} Let $p$ be a prime integer of $\mathcal{O}_\mathbb{F}$ such that $\mathfrak{p} = p\mathcal{O}_\mathbb{F}$ is a prime ideal of $\mathcal{O}_\mathbb{F}$ and $\mathfrak{p}\mathcal{O}_\mathbb{K}$ is relatively prime to the conductor of $\mathcal{O}_\mathbb{F}[\theta]$, and let $\bar{p}(X) = \bar{p}_1(X)^{e_1}\bar{p}_2(X)^{e_2}\cdots\bar{p}_g(X)^{e_g}$ be the factorization of the polynomial $\bar{p}(X) = p(X) \textrm{ mod } p$ into monic irreducibles $\bar{p}_i(X) = p_i(X) \textrm{ mod }p$ over the residue class field $\mathcal{O}_\mathbb{F}/\mathfrak{p}$, with all the $p_i(X) \in \mathcal{O}_\mathbb{F}[X]$ and monic. Then, $\mathfrak{B}_i=\mathfrak{p}\mathcal{O}_\mathbb{K} + p_i(\theta)\mathcal{O}_\mathbb{K}$, $i=1,...,g$, are the different prime ideals of $\mathcal{O}_\mathbb{K}$ above $\mathfrak{p}$. The inertia degree $f(\mathfrak{B}_i/\mathfrak{p})$ of $\mathfrak{B}_i$ over $\mathfrak{p}$ is the degree of $\bar{p}_i(X)$, and one has
 \begin{equation*}
  \mathfrak{p}\mathcal{O}_\mathbb{K} = \mathfrak{B}_1^{e_1}\mathfrak{B}_2^{e_2}\cdots\mathfrak{B}_g^{e_g}.
 \end{equation*}
\end{proposition}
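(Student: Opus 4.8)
The plan is to translate this ideal-theoretic statement into a statement about the structure of the finite $k$-algebra $\mathcal{O}_\mathbb{K}/\mathfrak{p}\mathcal{O}_\mathbb{K}$, where $k = \mathcal{O}_\mathbb{F}/\mathfrak{p}$, and then to transport that structure across the subring $\mathcal{O}' = \mathcal{O}_\mathbb{F}[\theta] \subseteq \mathcal{O}_\mathbb{K}$, whose reduction modulo $\mathfrak{p}$ is computable directly from $\bar{p}(X)$. First I would record the standard dictionary. Since $\mathcal{O}_\mathbb{K}$ is a Dedekind domain, the prime ideals $\mathfrak{B}_i$ lying above $\mathfrak{p}$ are exactly the maximal ideals of the Artinian ring $\mathcal{O}_\mathbb{K}/\mathfrak{p}\mathcal{O}_\mathbb{K}$; writing $\mathfrak{p}\mathcal{O}_\mathbb{K} = \prod_i \mathfrak{B}_i^{e_i}$ and applying the Chinese Remainder Theorem gives $\mathcal{O}_\mathbb{K}/\mathfrak{p}\mathcal{O}_\mathbb{K} \cong \prod_i R_i$ with each $R_i = \mathcal{O}_\mathbb{K}/\mathfrak{B}_i^{e_i}$ local Artinian, so that $f(\mathfrak{B}_i/\mathfrak{p}) = \dim_k(R_i/\mathrm{rad}\,R_i)$ and $e(\mathfrak{B}_i/\mathfrak{p})$ equals the nilpotency exponent of $\mathrm{rad}\,R_i$.

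The crux is the ring isomorphism $\mathcal{O}'/\mathfrak{p}\mathcal{O}' \cong \mathcal{O}_\mathbb{K}/\mathfrak{p}\mathcal{O}_\mathbb{K}$, and this is precisely where the hypothesis that $\mathfrak{p}\mathcal{O}_\mathbb{K}$ be coprime to the conductor $\mathfrak{F}$ of $\mathcal{O}'$ is indispensable. I would argue by localizing at $\mathfrak{p}$: set $T = \mathcal{O}_\mathbb{F} \setminus \mathfrak{p}$ and pass to the discrete valuation ring $A = T^{-1}\mathcal{O}_\mathbb{F}$, together with $B = T^{-1}\mathcal{O}_\mathbb{K}$ and $B' = A[\theta] = T^{-1}\mathcal{O}'$, whose conductor in $B$ is $T^{-1}\mathfrak{F}$. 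Because $B$ is integral over the local ring $A$, every maximal ideal of $B$ contracts to $\mathfrak{p}A$, so $\mathfrak{p}B$ sits inside the Jacobson radical of $B$. The coprimality relation localizes to $\mathfrak{p}B + T^{-1}\mathfrak{F} = B$; if $T^{-1}\mathfrak{F}$ lay in some maximal ideal $\mathfrak{m}$ of $B$, then $\mathfrak{p}B \subseteq \mathfrak{m}$ as well, forcing $B = \mathfrak{p}B + T^{-1}\mathfrak{F} \subseteq \mathfrak{m}$, a contradiction. Hence the localized conductor is the unit ideal, which yields $B' = B$. Finally, since $\mathfrak{p}$ is maximal in $\mathcal{O}_\mathbb{F}$ the elements of $T$ already act invertibly on the $k$-algebras $\mathcal{O}'/\mathfrak{p}\mathcal{O}'$ and $\mathcal{O}_\mathbb{K}/\mathfrak{p}\mathcal{O}_\mathbb{K}$, so reduction commutes with this localization and $\mathcal{O}'/\mathfrak{p}\mathcal{O}' \cong B'/\mathfrak{p}B' = B/\mathfrak{p}B \cong \mathcal{O}_\mathbb{K}/\mathfrak{p}\mathcal{O}_\mathbb{K}$.

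With the crux settled, the left-hand side is explicit. As $\mathcal{O}' \cong \mathcal{O}_\mathbb{F}[X]/(p(X))$ (here $p$ monic of degree $[\mathbb{K}:\mathbb{F}]$ makes the quotient free of the right rank), reducing modulo $\mathfrak{p}$ gives $\mathcal{O}'/\mathfrak{p}\mathcal{O}' \cong k[X]/(\bar{p}(X))$. Feeding in $\bar{p}(X) = \bar{p}_1(X)^{e_1}\cdots\bar{p}_g(X)^{e_g}$ and applying the Chinese Remainder Theorem yields $k[X]/(\bar{p}) \cong \prod_{i=1}^{g} k[X]/(\bar{p}_i^{e_i})$, each factor local Artinian with residue field $k[X]/(\bar{p}_i)$ of degree $\deg \bar{p}_i$ over $k$ and with maximal ideal $(\bar{p}_i)$ of nilpotency exponent $e_i$. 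Matching this against the dictionary of the first paragraph identifies $g$ with the number of primes above $\mathfrak{p}$, gives $f(\mathfrak{B}_i/\mathfrak{p}) = \deg \bar{p}_i$ and $e(\mathfrak{B}_i/\mathfrak{p}) = e_i$, and hence $\mathfrak{p}\mathcal{O}_\mathbb{K} = \mathfrak{B}_1^{e_1}\cdots\mathfrak{B}_g^{e_g}$.

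To recover the explicit generators I would pull back the $i$-th maximal ideal along the composite isomorphism. Under $\theta \leftrightarrow X$, that maximal ideal is the ideal $(\bar{p}_i)$ in the $i$-th factor, so its preimage in $\mathcal{O}_\mathbb{K}$ is generated by $\mathfrak{p}$ together with the class of $p_i(\theta)$, i.e. $\mathfrak{B}_i = \mathfrak{p}\mathcal{O}_\mathbb{K} + p_i(\theta)\mathcal{O}_\mathbb{K}$, exactly as asserted. I expect the \emph{localization step} of the second paragraph to be the main obstacle, since it is the only place the conductor hypothesis genuinely does work; once $\mathcal{O}'/\mathfrak{p}\mathcal{O}' \cong \mathcal{O}_\mathbb{K}/\mathfrak{p}\mathcal{O}_\mathbb{K}$ is secured, everything else is a formal consequence of the Chinese Remainder Theorem and the Dedekind property of $\mathcal{O}_\mathbb{K}$.
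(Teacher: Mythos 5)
The paper does not actually prove this proposition: it is quoted verbatim from Neukirch (the Dedekind--Kummer theorem, Prop.~I.8.3 of \cite{neukirch}) and used as a black box in Appendices II and III, so there is no in-paper argument to measure yours against. Your proof is correct and complete, and follows the standard line: everything hinges on the isomorphism $\mathcal{O}_\mathbb{F}[\theta]/\mathfrak{p}\mathcal{O}_\mathbb{F}[\theta] \cong \mathcal{O}_\mathbb{K}/\mathfrak{p}\mathcal{O}_\mathbb{K}$, after which the presentation $\mathcal{O}_\mathbb{F}[\theta]\cong\mathcal{O}_\mathbb{F}[X]/(p(X))$ (valid because $p(X)$ is monic of degree $[\mathbb{K}:\mathbb{F}]$) and the Chinese Remainder Theorem on both sides do the rest, with the ramification exponents read off as nilpotency exponents of the radicals of the local factors and the residue degrees as $k$-dimensions of the residue fields. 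Where you genuinely diverge from Neukirch's own write-up is in how the conductor hypothesis is spent: he deduces $\mathcal{O}_\mathbb{K}=\mathcal{O}_\mathbb{F}[\theta]+\mathfrak{p}\mathcal{O}_\mathbb{K}$ and $\mathcal{O}_\mathbb{F}[\theta]\cap\mathfrak{p}\mathcal{O}_\mathbb{K}=\mathfrak{p}\mathcal{O}_\mathbb{F}[\theta]$ directly from $\mathfrak{F}+\mathfrak{p}\mathcal{O}_\mathbb{K}=\mathcal{O}_\mathbb{K}$ and $\mathfrak{F}\subseteq\mathcal{O}_\mathbb{F}[\theta]$, whereas you localize at $\mathfrak{p}$ and show the localized conductor is the unit ideal, so that $A[\theta]$ already equals the localized ring of integers; both are sound, and your route makes it especially transparent that coprimality to the conductor is exactly the condition under which $\theta$ generates the ring of integers locally at $\mathfrak{p}$. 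The only steps you lean on without comment are routine but worth flagging: that the conductor commutes with localization (true because $\mathcal{O}_\mathbb{K}$ is a finite $\mathcal{O}_\mathbb{F}[\theta]$-module, and in any case you only need the easy containment of $T^{-1}\mathfrak{F}$ in the localized conductor), and that the nilpotency exponent of $\mathfrak{B}_i/\mathfrak{B}_i^{e_i}$ is exactly $e_i$, which uses $\mathfrak{B}_i^{m}\neq\mathfrak{B}_i^{e_i}$ for $m<e_i$, i.e.\ uniqueness of ideal factorization in the Dedekind domain $\mathcal{O}_\mathbb{K}$.
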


\begin{theorem} \label{thm1}
\cite[Theorem 2.47]{fin_field} Let $\mathbb{F}_q$ be a finite field with $q$ elements and characteristic $p$, $n$ a natural number such that $p$ does not divide $n$. The $n^{th}$ cyclotomic polynomial $\Phi_n(X)$ factorizes over $\mathbb{F}_q$ as a product of irreducible factors all of the same degree $d$ where $d$ is the order of $q$ mod $n$ ($d$ is the smallest positive integer such that $ q^d \equiv 1 \textrm{ mod } n$). 
 \end{theorem}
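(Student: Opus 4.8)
The plan is to analyze the roots of $\Phi_n(X)$ inside an algebraic closure $\overline{\mathbb{F}_q}$ and to compute the degrees of their minimal polynomials over $\mathbb{F}_q$. First I would use the hypothesis $p \nmid n$ to guarantee that $X^n - 1$ is separable over $\mathbb{F}_q$: its formal derivative $nX^{n-1}$ is nonzero and shares no root with $X^n - 1$, so $X^n - 1$ has $n$ distinct roots, which form a cyclic group $\mu_n$ of order exactly $n$ under multiplication. The roots of $\Phi_n(X)$ are then precisely the generators of $\mu_n$, i.e. the elements of multiplicative order exactly $n$, and since $\Phi_n(X)$ divides the separable polynomial $X^n - 1$, it is itself separable.

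Next I would fix such a primitive $n$-th root of unity $\zeta$, so that $\mathbb{F}_q(\zeta) = \mathbb{F}_{q^m}$ for some $m$, and the degree of the minimal polynomial of $\zeta$ over $\mathbb{F}_q$ equals $m = [\mathbb{F}_q(\zeta):\mathbb{F}_q]$. The key step is to identify $m$ with $d$. Since $\mathbb{F}_{q^m}^{\times}$ is cyclic of order $q^m - 1$, the element $\zeta$, of order $n$, lies in $\mathbb{F}_{q^m}$ if and only if $n \mid q^m - 1$, equivalently $q^m \equiv 1 \pmod{n}$. The smallest positive integer $m$ with this property is by definition $d = \mathrm{ord}_n(q)$. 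Hence $[\mathbb{F}_q(\zeta):\mathbb{F}_q] = d$, and the minimal polynomial of $\zeta$ over $\mathbb{F}_q$ is irreducible of degree $d$.

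I would then observe that this degree computation used only that $\zeta$ has multiplicative order $n$ and did not depend on the particular choice of primitive root. Therefore every root of $\Phi_n(X)$ generates an extension of $\mathbb{F}_q$ of degree $d$, so every irreducible factor of $\Phi_n(X)$ over $\mathbb{F}_q$ has degree $d$. Combining this with separability, $\Phi_n(X)$ factors as a product of distinct irreducible polynomials each of degree $d$, which is exactly the assertion.

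The step I expect to require the most care is the identification $m = d$ in the second paragraph, in particular the equivalence $\zeta \in \mathbb{F}_{q^m} \iff q^m \equiv 1 \pmod{n}$. This rests on $\mu_n$ being cyclic of order exactly $n$ (supplied by separability, hence by $p \nmid n$) together with the cyclicity of the multiplicative group of a finite field; everything else is routine once these two facts are in place.
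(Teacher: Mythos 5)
Your argument is correct and complete: separability of $X^n-1$ from $p\nmid n$, identification of the roots of $\Phi_n$ with the elements of order exactly $n$, and the equivalence $\zeta\in\mathbb{F}_{q^m}\iff n\mid q^m-1$ together give $[\mathbb{F}_q(\zeta):\mathbb{F}_q]=\mathrm{ord}_n(q)=d$ for every root, hence every irreducible factor has degree $d$. Note that the paper does not prove this statement at all --- it quotes it as Theorem 2.47 of Lidl and Niederreiter --- and your proof is essentially the standard textbook argument found there, so there is nothing to reconcile.
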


\section*{Appendix II}
\begin{center}
\textsc{Proof that $(-25 + 12i)\mathbb{Z}[i]$ does not split completely in $\mathbb{Z}[i,\zeta_5]$}
\end{center}

Let $\mathfrak{p}_{769} = (-25 + 12i)\mathbb{Z}[i]$ which is a prime ideal of $\mathbb{Z}[i]$. The discriminant of $\mathbb{Q}(i,\zeta_5)/\mathbb{Q}(i)$ is $125\mathbb{Z}[i]$ and clearly $\mathfrak{p}_{769}$ does not divide $125\mathbb{Z}[i]$. So, $\mathfrak{p}_{769}$ is not ramified in $\mathbb{Q}(i,\zeta_5)$. We have the following tower of Galois field extensions.
\begin{eqnarray*}
 \mathbb{Q} & \subset & \mathbb{Q}(i) \subset  \mathbb{Q}(i,\zeta_5),\\
 \mathbb{Q} & \subset & \mathbb{Q}(\zeta_5) \subset  \mathbb{Q}(i,\zeta_5)
\end{eqnarray*}
where $[\mathbb{Q}(i,\zeta_5) : \mathbb{Q}] = 8$, $[\mathbb{Q}(\zeta_5) : \mathbb{Q}] = 4$. The prime ideal $769\mathbb{Z}$ splits into two prime ideals $\mathfrak{p}_{769} = (-25 + 12i)\mathbb{Z}[i]$ and $\mathfrak{q}_{769}=(-25 - 12i)\mathbb{Z}[i]$ in $\mathbb{Z}[i]$. From Corollary \ref{tower_mul} and \eqref{prime_fac} in Appendix I, $769\mathbb{Z}$ splits completely in $\mathbb{Z}[i,\zeta_5]$ if and only if $\mathfrak{p}_{769}$ and $\mathfrak{q}_{769}$ split completely in $\mathbb{Z}[i,\zeta_5]$. Also, $769\mathbb{Z}$ splits completely in $\mathbb{Z}[i,\zeta_5]$ if and only if it splits completely in $\mathbb{Z}[\zeta_5]$.

So, it is sufficient to prove that the ideal $769\mathbb{Z}$ does not split completely in $\mathbb{Z}[\zeta_5]$. For this purpose, we consider the minimal polynomial of $\zeta_5$ over $\mathbb{Q}$ which is $X^4+X^3+X^2+X+1$ and is also the $5^{th}$ cyclotomic polynomial $\Phi_5(X)$. From Theorem \ref{thm1} in Appendix I, $\Phi_5(X)$ splits into only 2 irreducible monic factors over $\mathbb{F}_{769}$, each of degree $2$. Hence, from Proposition \ref{prop4}, it is clear that $769\mathbb{Z}$ does not split completely in $\mathbb{Z}[\zeta_5]$. This establishes that $(-25 + 12i)\mathbb{Z}[i]$ does not split completely in $\mathbb{Z}[i,\zeta_5]$. 

\section*{Appendix III}
\begin{center}
\textsc{Proof that $(3 -8\omega)\mathbb{Z}[\omega]$ does not split completely in $\mathbb{Z}[\omega,\zeta_7]$}
\end{center}

Let $\mathfrak{p}_{97} = (3 -8\omega)\mathbb{Z}[\omega]$ which is a prime ideal of $\mathbb{Z}[\omega]$. The discriminant of $\mathbb{Q}(\omega,\zeta_5)/\mathbb{Q}(\omega)$ is $7^5\mathbb{Z}[\omega]$ and clearly $\mathfrak{p}_{97}$ does not divide $7^5\mathbb{Z}[\omega]$. So, $\mathfrak{p}_{97}$ is not ramified in $\mathbb{Q}(\omega,\zeta_7)$. We have the following Galois field extensions.
\begin{eqnarray*}
 \mathbb{Q} & \subset & \mathbb{Q}(\omega) \subset \mathbb{Q}(\omega,\zeta_7),\\
 \mathbb{Q} & \subset & \mathbb{Q}(\zeta_7) \subset \mathbb{Q}(\omega,\zeta_7)
\end{eqnarray*}
where $[\mathbb{Q}(\omega,\zeta_7) : \mathbb{Q}] = 12$, $[\mathbb{Q}(\zeta_7) : \mathbb{Q}] = 6$. The prime ideal $97\mathbb{Z}$ splits into two prime ideals $\mathfrak{p}_{97} = (3 -8\omega)\mathbb{Z}[\omega]$ and $\mathfrak{q}_{97}=(3 -8\omega^2)\mathbb{Z}[\omega]$ in $\mathbb{Z}[\omega]$. It is clear from the Corollary \ref{tower_mul} and \eqref{prime_fac} in Appendix I that $97\mathbb{Z}$ splits completely in $\mathbb{Z}[\omega,\zeta_7]$ if and only if $\mathfrak{p}_{97}$ and $\mathfrak{q}_{97}$ split completely in $\mathbb{Z}[\omega,\zeta_7]$. Also, $97\mathbb{Z}$ splits completely in $\mathbb{Z}[\omega,\zeta_7]$ if and only if it splits completely in $\mathbb{Z}[\zeta_7]$.

So, it suffices to prove that the ideal $97\mathbb{Z}$ does not split completely in $\mathbb{Z}[\zeta_7]$. For this purpose, we consider the minimal polynomial of $\zeta_7$ over $\mathbb{Q}$ which is $X^6+X^5+X^4+X^3+X^2+X+1$ and is also the $7^{th}$ cyclotomic polynomial $\Phi_7(X)$. From Theorem \ref{thm1} in Appendix I, $\Phi_7(X)$ splits into only 3 irreducible monic factors, each of degree $2$ over $\mathbb{F}_{97}$. Hence, from Proposition \ref{prop4}, it is clear that $97\mathbb{Z}$ does not split completely in $\mathbb{Z}[\zeta_7]$. This establishes that $(3 -8\omega)\mathbb{Z}[\omega]$ does not split completely in $\mathbb{Z}[\omega,\zeta_7]$.

\end{document}